\newfontfamily\magic[Path=./,Ligatures=TeX,Scale=MatchLowercase,ItalicFont=MagicMedieval.ttf]{MagicMedieval.ttf}
\definecolor{equationcolor}{RGB}{108,153,224}
\definecolor{refcolor}{RGB}{214,86,86}
\definecolor{changescolor}{RGB}{188, 104, 104}
\def\blfootnote{\gdef\@thefnmark{}\@footnotetext}
\newcommand{\stab}{\operatorname{stab}}
\renewcommand{\v}[1]{\ensuremath{\boldsymbol #1}}
\newcommand{\ms}[1]{\textsf{#1}}
\newcommand{\iden}{\mathbbm{1}}
\newcommand{\E}[1]{\mathcal{E}}
\def\E{ {\cal E} }
\newenvironment{sproof}{%
  \proof}{\endproof}
\newcommand{\mtg}[1]{\ensuremath{\mathord{\text{\magic #1}}}}
\newcommand{\M}{\mtg{M}}
\newtheorem{thm}{Theorem}
\newtheorem{res}{Result}
\newtheorem{lem}[thm]{Lemma}
\newtheorem{cor}[thm]{Corollary}
\begin{document}

\title{Trading athermality for nonstabiliserness}

\author{A. de Oliveira Junior}
\email{alexssandredeoliveira@gmail.com}
	\affiliation{Center for Macroscopic Quantum States bigQ, Department of Physics,
Technical University of Denmark, Fysikvej 307, 2800 Kgs. Lyngby, Denmark}
\author{Rafael A. Macêdo}
    \affiliation{Physics Department, Federal University of Rio Grande do Norte, Natal, 59072-970, Rio Grande do Norte, Brazil}
    \affiliation{International Institute of Physics, Federal University of Rio Grande do Norte, 59078-970, Natal, RN, Brazil}
\author{Jakub Czartowski}
	\affiliation{School of Physical and Mathematical Sciences, Nanyang Technological University, 21 Nanyang Link, 637361 Singapore, Republic of Singapore}
    \affiliation{School of Physics, Trinity College Dublin, Dublin 2, Ireland}
\author{Jonatan Bohr Brask}
    \affiliation{Center for Macroscopic Quantum States bigQ, Department of Physics,
Technical University of Denmark, Fysikvej 307, 2800 Kgs. Lyngby, Denmark}
\author{Rafael Chaves}
    \affiliation{International Institute of Physics, Federal University of Rio Grande do Norte, 59078-970, Natal, RN, Brazil}
    \affiliation{School of Science and Technology, Federal University of Rio Grande do Norte, Natal, Brazil}
\date{\today}

\begin{abstract}
Quantum advantage arises from quantum states that cannot be efficiently simulated on a classical computer. Such states are characterised by a property known as nonstabiliserness. In this work, we investigate whether nonstabiliserness can be generated by placing an initially stabiliser state in contact with a heat bath. Under minimal thermodynamic assumptions, we derive a necessary and sufficient condition for when this is possible. This yields an analytic characterisation of all nonstabiliser qubit states reachable through such thermal processes, together with explicit bounds on their nonstabiliserness. This, in turn, allows us to identify optimal regimes for generating this resource, including the Hamiltonians that maximise nonstabiliserness and the critical temperatures at which it emerges. Beyond the qubit case, we establish a general trade-off between the nonstabiliserness attainable under thermal operations and the initial nonequilibrium free energy of the system.
\end{abstract}

\maketitle

\section{Introduction}
A fundamental problem in quantum information is to characterise the boundary between classical and quantum computation. Although entanglement was once believed to be the essential ingredient for quantum advantage, the Gottesman–Knill theorem showed that a specific class of states---even when highly entangled---remain efficiently simulable on a classical computer~\cite{gottesman1997stabilizercodesquantumerror,Aharonov1997,gottesman1998,Aaronson2004,Maarten2013}. Even so, stabiliser states are indispensable for quantum error correction and fault-tolerant quantum computation~\cite{Shor1996,Gottesman1998f,Kitaev2003,Raussendorf2007}. This dichotomy led to the development of a resource theory of nonstabiliserness, which captures the extra ingredient necessary for universal quantum computation~\cite{Veitch2014, Howard2017,Chitambar2019}.

Such a framework poses two nontrivial problems. First, not every nonstabiliser state is useful for quantum computation~\cite{Campbell2010}. In practice, we prepare only noisy versions of target states. Whether these are sufficient is determined by distillability: can they be purified into useful magic states using only stabiliser operations? Second, preparing useful magic states is intrinsically difficult. For logical qubits in fault-tolerant settings, noise from the unavoidable coupling to a heat bath leads to thermalisation. This drives states toward equilibrium, which, at high temperatures, becomes stabiliser~\cite{bakshi2024high}. Yet, not every thermalisation process itself needs to be a stabiliser operation~\cite{Mukhopadhyay2018,Koukoulekidis2022,trigueros2025nonstabilizernesserrorresiliencenoisy,macedo2026littlethingheatdoes}. We are therefore led to ask: is the environment only an enemy that washes out nonstabiliserness, or can it be used to generate and manipulate it?

\begin{figure}[t]
    \vspace{1cm}
    \centering
    \includegraphics{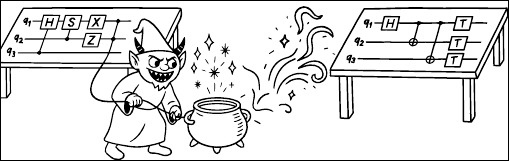}
    \caption{\emph{Thermal magician}. This paper shows that a heat bath is not just a source of noise, it can also brew nonstabiliserness from stabiliser states. We determine exactly when thermal processes create nonstabiliserness, how much they produce, and which temperatures and Hamiltonians are optimal.   
    }
    \label{F:thermal-magician}
\end{figure}
\begin{figure*}[t]
    \centering
    \includegraphics{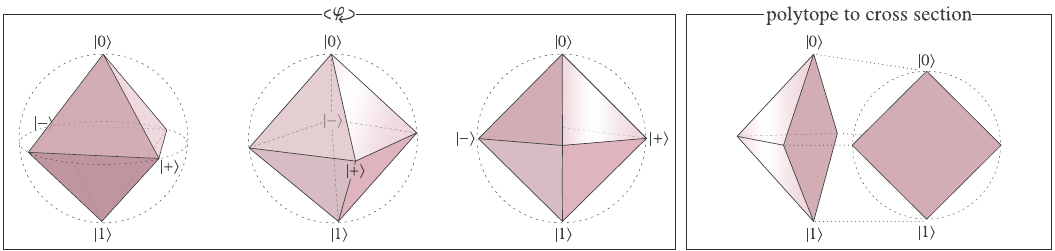}
    \caption{\emph{Stabiliser polytope}. The stabiliser polytope under rotations by an angle $\varphi$. Each panel displays a different orientation of the polytope, highlighting its octahedral symmetry. The right panel shows a cross-section of the Bloch ball together with the projection of the stabiliser polytope onto this cross-section}
    \label{F-stabiliser-polytope}
\end{figure*}

We answer this question by showing that the environment is not only an enemy: athermality---the property of being out of equilibrium with a reference thermal bath---can be traded for nonstabiliserness. First, we derive a general thermodynamic bound that applies to arbitrary finite-dimensional systems. This bound provides a universal necessary condition: the nonstabiliserness of any state reachable under thermal operations is limited by the system's initial nonequilibrium free energy. While not a full characterisation of reachability, it identifies athermality as the thermodynamic resource for generating nonstabiliserness. Second, for single qubits, we derive a necessary and sufficient condition for magic states to emerge from an initial stabiliser state under thermalisation processes with minimal assumptions. This framework allows us to characterise the full set of thermally reachable magic states, quantify how much nonstabiliserness can be produced, and identify operational thresholds such as the critical inverse temperature at which nonstabiliserness appears. The result is particularly relevant for logical magic state preparation in fault-tolerant quantum computation.

Our results show that the price to create nonstabiliserness is paid by bringing the state closer to thermal equilibrium, namely trading athermality for nonstabiliserness. This complements resource‑theoretic narratives for entanglement~\cite{deOliveiraJunior2025athermality}, coherence, and purity~\cite{Bartlett2007,Lostaglio2015descr,Lostaglio2015,Horodecki2015,Marvian2020,Takagi2020,Takagi2022}.

This paper is organised as follows. In Sec.~\ref{Sec:setting} we introduce the stabiliser framework for qubits and review the structure of thermal operations. In Sec.~\ref{Sec:main} we present our main result, providing a necessary and sufficient condition for the generation of nonstabiliserness under thermal operations, together with a general thermodynamic bound linking nonstabiliserness to athermality. In Sec.~\ref{Sec:optimal} we analyse optimal regimes, identifying the roles of temperature, coherence, and Hamiltonian orientation, and derive critical thresholds for nonstabiliserness generation and distillability. Finally, in Sec.~\ref{Sec:discussion} we discuss implications, extensions, and open questions, with technical details and proofs deferred to the Appendices.

\section{Setting the scene}\label{Sec:setting}

Given a qubit state $\rho$, what characterises it as nonstabiliser? We define a state to be nonstabiliser if it cannot be expressed as a convex mixture of stabiliser states~\cite{Veitch2012,Heinrich2019}. For a single qubit, the stabiliser states are the six eigenstates of the Pauli operators $\v\sigma=(X,Y,Z)$, corresponding to the six poles along the coordinate axes of the Bloch sphere. Their convex hull forms an octahedron known as the stabiliser polytope~[see Fig.~\ref{F-stabiliser-polytope}]. A general qubit state $\rho = \tfrac12(\iden + \v{r}\cdot\v{\sigma})$, with Bloch vector $\v{r}=(r_x,r_y,r_z)$, is stabiliser if and only if it lies inside this octahedron, a condition captured by
\begin{equation}\label{Eq:stabiliser-condition}
\norm{\v{r}}_1 =  \abs{r_x} + \abs{r_y} + \abs{r_z} \leq 1.
\end{equation}
Any state violating this inequality possesses nonstabiliserness.

Beyond a single qubit, the same geometric viewpoint still applies, but it comes with a combinatorial price. For $n$ qubits, the stabiliser polytope, denoted by $\operatorname{stab}_n$, is the convex hull of all pure $n$-qubit stabiliser states. However, its complexity increases dramatically with system size. In particular, the number of pure stabiliser states, equivalently the number of extremal points $\operatorname{ext}(\textrm{stab}_n)$ of the polytope, is
\begin{equation}
    |\operatorname{ext}(\operatorname{stab}_n)| = 2^n \prod_{k=1}^n (2^k+1) = 2^{\Theta(n^2)},
\end{equation}
where $\Theta(n^2)$ indicates quadratic scaling in the exponent. Thus, membership and optimisation over the full polytope quickly become intractable. This complexity is one reason why an exact geometric characterisation is especially valuable in the qubit case, where the stabiliser set reduces to the Bloch-ball octahedron~\eqref{Eq:stabiliser-condition}. At the same time, it also motivates the introduction of quantitative measures of nonstabiliserness that capture the distance of a state from the stabiliser polytope. We quantify deviation from the stabiliser polytope via the geometric monotone given by the minimal trace distance to the polytope, which we refer to as the nonstabiliserness of $\rho$:
\begin{equation}\label{Eq:nonstabiliserness-monotone}
\mathcal N\mathcal S(\rho) = \min_{\sigma \in \stab_n} T(\rho, \sigma),
\end{equation}
where $T(\rho, \sigma ) = \tfrac12| \rho-\sigma|_1$ is the trace distance~\cite{Nielsen2010}.

The unitaries that map Pauli operators to Pauli operators are known as Clifford unitaries. They preserve the stabiliser set, and therefore cannot map a stabiliser state to a nonstabiliser one (for a single qubit, they correspond to the rotational symmetries of the octahedron and merely permute its vertices)~(see Appendix~\hyperref[App:stabiliser-theory]{A-2} for a detailed discussion). Can general thermal processes---operations implemented using only a thermal bath---achieve what Clifford operations cannot?

We address this question using the framework of thermal operations~\cite{Janzing2000,horodecki2013fundamental,Brandao2013,Gour2018}. The system, described by a Hamiltonian $H$, is initially in a state $\rho$. Our focus is on qubit states\footnote{This is justified because, beyond the qubit case, constructing the reachable set under thermal operations remains an open problem. Although arbitrary state transformations under thermal operations are known~\cite{Gour2018}, in principle, to obey a complete set of necessary and sufficient conditions, these conditions are highly involved and difficult to characterise explicitly. Thus, beyond an implicit formulation, the problem remains open.} characterised by the ground-state population $p$ as the diagonal entry of $\rho$, and the coherence $c$ as the magnitude of the off-diagonal element in the energy eigenbasis. The bath, with Hamiltonian $H_{\ms{E}}$, is prepared in a Gibbs state at some inverse temperature $\beta = (k_B T)^{-1}$, where $k_B$ is the Boltzmann constant. The composite system and bath undergoes an energy‑preserving unitary $U$, satisfying $[U,H+H_{\ms{E}}]=0$, inducing a map
\begin{equation}\label{Eq:thermal-operations}
    \mathcal{E}(\rho):=\tr_{\ms E}\qty{U\qty[\rho\otimes \frac{e^{-\beta H_{\ms E}}}{\tr(e^{-\beta H_{\ms E}})}]U^{\dagger}}.
\end{equation}
States reachable from $\rho$ under~\eqref{Eq:thermal-operations}---thermally reachable set---are constrained jointly by energy conservation and time-translation symmetry~\cite{Lostaglio2015,Horodecki2015}. For a qubit, the accessible set forms a convex region $\mathcal{T}_+(\rho)$ that is rotationally symmetric about the energy axis~\cite{deOliveirajunior2022,vomEnde2022}. This axial symmetry reflects that only the magnitude of coherence matters under time-translation symmetry. The boundary encodes a trade-off: for any given target populations, only a limited amount of coherence can be retained. Two structural ingredients determine $\mathcal{T}_+(\rho)$. First, the energy populations evolve under Gibbs-preserving stochastic maps and are constrained by thermomajorisation~\cite{horodecki2013fundamental}. Writing $p$ and $ q$ for the initial and target ground-state population, the reachable $q$ lies in the interval
\begin{equation}
    I_\beta(p):=\qty[\min\{p, q^\star\},\,\max\{p, q^\star\}],
\end{equation}
where $q^\star:=1-\tfrac{1-\gamma}{\gamma}p$ and $\gamma$ is the Gibbs ground-state population of the system. Second, time-translation symmetry bounds the coherence independently of its phase. Let $c$ and $r$ be the initial and target coherence, the achievable $r$ for a target with ground-state population $q\in I_\beta(p)$ obeys the tight bound $r\leq \v{c}_{\max}(q)$, where~\cite{Horodecki2015,Lostaglio2015}:
\begin{equation}
   \!\!\!\v{c}_{\max}(q)\!:=\!|c|
\frac{\sqrt{[q(1-\gamma)\!-\!\gamma(1-p)][p(1-\gamma)\!-\!\gamma(1-q)]}}{|p-\gamma|}.
\end{equation}
\begin{figure}
    \centering
    \includegraphics{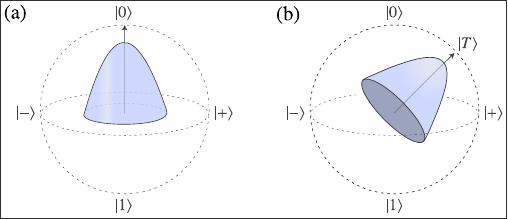}
    \caption{\emph{Reachable sets under thermal operations for different Hamiltonian orientations}. 
    For a fixed bath inverse temperature $\beta = 2$, the three panels show the sets of qubit states reachable under thermal operations, $\mathcal{T}_+(\rho)$, for Hamiltonians whose Bloch vectors point along (a) $(0,0,1)$, (b) $(\tfrac{1}{\sqrt{3}},\tfrac{1}{\sqrt{3}},\tfrac{1}{\sqrt{3}})$, respectively. We also highlight the magic state $\ket{\ms{T}}$.}
    \label{F-future-thermal-cone}
\end{figure}
For qubits, these constraints are jointly necessary and sufficient: for every $q\in I_\beta(p)$ the boundary value $r=c_{\max}(q)$ is attainable by a thermal operation, and by convexity of thermal operations all intermediate $0\le r\le\v{c}_{\max}(q)$ are attainable as well. See Appendix~\hyperref[App:coherent-thermal-cones]{A-2} for a detailed discussion. The interplay between the population and coherence constraints (Lemmas~\ref{Lem:population-evolution} and~\ref{Lem:coherence-evolution}) is illustrated in Fig.~\ref{F-future-thermal-cone}, which depicts the qubit future thermal cone $\mathcal{T}_+(\rho)$.

The reachability of states under thermal processes depends crucially on how far the system is from equilibrium. This motivates the introduction of measures of athermality, among which the nonequilibrium free energy plays a central role. Consider a $d$-dimensional quantum system described by a Hamiltonian $H$, prepared in an out-of-equilibrium state $\rho$, with corresponding Gibbs state $\tau_\beta \propto e^{-\beta H}$. The deviation from thermal equilibrium can be quantified by the nonequilibrium free-energy difference
\begin{equation}
\Delta F_\beta(\rho) := F_\beta(\rho) - F_\beta(\tau_\beta).
\end{equation}
Here $F_\beta(\rho) := \tr(\rho H) - \beta^{-1} S(\rho)$ is the nonequilibrium free energy, with $S(\rho) := -\tr(\rho \log \rho)$ the von Neumann entropy, while $F_\beta(\tau_\beta) = -\beta^{-1} \log Z$ is the equilibrium free energy, with $Z = \tr(e^{-\beta H})$ the partition function.

\section{Main results}\label{Sec:main}

Our results come in two layers. First, we identify a general thermodynamic constraint, valid in arbitrary finite dimension, that limits how much nonstabiliserness can be generated under thermal operations from a given input. Second, we focus on single qubits and derive a necessary-and-sufficient criterion for nonstabiliserness generation from stabiliser states under thermal operations.

Our first result is a free-energy bound that is independent of the microscopic implementation, control protocol, or system size. More precisely:

\begin{res}[Athermality \& nonstabiliserness]\label{Res:tradeoff}
The nonstabiliserness of any state $\sigma$ thermally reachable from an input $\rho$ at inverse temperature $\beta$ is bounded by
\begin{equation}\label{Eq:trade-general}     
    \mathcal{NS}(\sigma)\leq \mathcal{NS}(\tau_\beta)+\sqrt{\frac{\beta}{2}\Delta F_\beta(\rho)}.
\end{equation}
In particular, if the Gibbs state is stabiliser, then
\begin{equation}
    \mathcal{NS}(\sigma)\leq \sqrt{\frac{\beta}{2}\Delta F_\beta(\rho)}.
\end{equation}
\end{res}
\begin{sproof}
    The bound follows by comparing $\sigma$ to the Gibbs state $\tau_\beta$, using the continuity of the trace-distance measure of nonstabiliserness, the quantum Pinsker inequality, and the monotonicity of $\Delta F_\beta$ under thermal operations. The full derivation is given in Appendix~\hyperref[App:athermality-nonstabiliserness]{A-3}.
\end{sproof}

Result~\ref{Res:tradeoff} establishes a direct link between nonstabiliserness and quantum thermodynamics: athermality, quantified by the nonequilibrium free-energy difference, upper-bounds the nonstabiliserness that thermal operations can generate. This statement is nontrivial because it applies without assumptions on the microscopic interaction model, control scheme, or system size. In this sense, it identifies athermality as the ``thermodynamic budget'' for nonstabiliserness generation.

However, this bound is only a necessary constraint--it limits what thermal operations can produce, but does not determine when the available athermality can actually be converted into nonstabiliserness. Consequently, Result~\ref{Res:tradeoff} should be interpreted as a fundamental link between two distinct resource theories rather than a tight practical tool. Because the bound is completely agnostic to the microscopic model, it can be quite loose in practice. A simple qubit example illustrates this: consider energy-diagonal inputs with a fixed ground-state population $p$ and two Hamiltonians with the same spectrum, $H_Z=Z$ and $H_T=\tfrac{1}{\sqrt{3}}(X+Y+Z)$. Since the spectrum and populations are the same, the initial nonequilibrium free energy $\Delta F_\beta(\rho)$ is the same in both cases. Nevertheless, as we show below, the $Z$-aligned Hamiltonian does not generate nonstabiliserness from such inputs, whereas the $T$-aligned Hamiltonian can do so once the bath is sufficiently cold. Thus, Result~\ref{Res:tradeoff} captures the thermodynamic price of nonstabiliserness generation, but not the geometric conditions under which that price can actually be paid. 

This motivates our second result: for single qubits, we derive an exact necessary-and-sufficient criterion for when thermal operations generate nonstabiliserness. Specifically, we ask: given a system described by a Hamiltonian \mbox{$H=\hat{\v n}\cdot\v\sigma$}, initially prepared in a stabiliser state and coupled to a heat bath at inverse temperature $\beta$, what are the necessary and sufficient conditions for nonstabiliserness generation?

\begin{res}[Nonstabiliserness generation]\label{Thm:magic-generation}
Magic states are generated from an initial stabiliser state $\rho$ under thermal operations at inverse temperature $\beta$ if and only if 
\begin{equation}\label{Eq:magic-witness-generation}
  \M(\rho):= \max_{q\in I_\beta(p)}\ \max_{\v{s}\in\{\pm1\}^3}
\Bigl(\v{c}_{\max}(q)\, \sqrt{3-h^2_s}+\bigl|2q-1\bigr|\,\bigl|h_s\bigr|\Bigr) > 1.
\end{equation}
where $h_{\v s}:=\v s^{\!\top}\cdot\hat{\v n}$ with $\v{s}\in\{\pm1\}^3$ a sign vector.
\end{res}
\begin{sproof}
    We scan the reachable populations $q$ along the energy axis. For each $q$, coherence is available up to $\v{c}_{\max}(q)$. The stabiliser polytope has flat facets, so each facet is a plane that appears as a straight line in the Hamiltonian-aligned frame. This line encodes a trade-off: increasing the population imbalance $|2q-1|$ forces a decrease in allowable coherence, and increasing coherence forces a decrease in imbalance. If, for any facet and any reachable $q$, the available coherence and height cross that line, the operation has produced nonstabiliserness. See Appendix~\hyperref[App-necessary-sufficient]{B-1} for the formal proof, and Fig.~\ref{F:thermal-cone-plot} for an illustration of the reachable set.
\end{sproof}
Thermal operations induce a natural split between ``easy'' and ``hard'' nonstabiliserness generation. In some instances, the outer maximisation is achieved at the initial population $q=p$, giving a condition that is independent of $\beta$. Temperature then plays no role--only geometry does--and nonstabiliserness can be produced at fixed energy populations by adjusting only the coherence phase, i.e., by applying a time translation $e^{-itH}$, without requiring any population change. In contrast, when the maximiser occurs at  $q\neq p$, temperature becomes decisive: this is hard nonstabiliserness, where tuning $\beta$ enlarges the thermally reachable set and/or increases the available coherence until the stabiliser threshold is crossed.

These observations about Result~\ref{Thm:magic-generation} show that temperature, coherence, and geometry are critical parameters for nonstabiliserness generation. To make this concrete, we consider a simpler scenario in which the temperature dependence and the underlying geometry emerge directly from an explicit criterion. In particular, for states that are diagonal in the energy eigenbasis, Result~\ref{Thm:magic-generation} admits a simple formulation (see Appendix~\hyperref[App-necessary-sufficient]{B-1}):
\begin{equation}\label{Eq:magic-generation-incoheren}
    \M(p,\beta)=\|\hat{\v{n}}\|_{1}\times
\begin{cases}
\bigl|1-2p\,e^{-2\beta}\bigr|, & p<\gamma,\\[4pt]
|2p-1|, & p>\gamma.
\end{cases}
\end{equation}
Nonstabiliserness is generated iff $\M(p,\beta)>1$. Equation~\eqref{Eq:critical-temperature} reveals two distinct regimes: For $p<\gamma$, as the criterion shows, magic states appear only once the inverse temperature crosses a critical value, 
\begin{equation}\label{Eq:critical-temperature}
    \beta_{\text{crt}}> \frac{1}{2}\log\qty(\frac{2\|\hat{\v n}\|_1 p}{\|\hat{\v{n}}\|_1-1}).
\end{equation}
Whereas for $p>\gamma$, the $\beta$–dependence drops out and the condition is purely geometric, nonstabiliserness hinges only on the Hamiltonian direction. In either case, if $H$ is aligned with a pure stabiliser direction, no magic state is generated.

The witness $\M(p,\beta)$ admits operational interpretations that quantify the robustness and thermodynamic requirements of nonstabiliserness generation. When $\M(p,\beta)>1$, the excess $\M-1$ determines the minimum depolarising noise strength needed to destroy the generated nonstabiliserness, providing a direct measure of robustness. For $p<\gamma$, the normalised witness admits a reparameterisation $L$ with the exact identity $L=2(\beta-\beta_{\text{crt}})$, which means it behaves linearly with the inverse-temperature surplus. These interpretations, discussed in Appendix~\hyperref[App:operational-thermodynamic-meanings]{B-2}, transform the witness from a binary indicator into quantitative measures of both operational robustness and thermodynamic resource requirements in the energy-incoherent setting.
\begin{figure}
    \centering
    \includegraphics{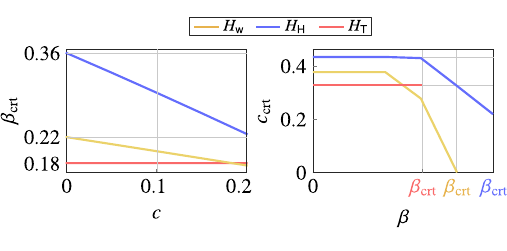}
    \caption{\emph{Critical parameters}. For a fixed population $p=0.3$ and three Hamiltonians with Bloch directions $\hat{\v n}\in\{\tfrac{1}{\sqrt{2}}(1,1,0),\tfrac{1}{\sqrt{6}}(2,1,1),\tfrac{1}{\sqrt{3}}(1,1,1)\}$ (shown as blue $H_\ms{H}$, yellow $H_{\ms w}$, and red $H_\ms{T}$), we illustrate the critical parameters for nonstabiliserness generation. Left panel: critical inverse temperature $\beta_{\rm{crt}}$ versus coherence $c$. The curves decrease monotonically with $c$. Right panel: critical coherence $c_{\rm{crt}}$ versus inverse temperature $\beta$. Lowering the temperature decreases the amount of coherence required to generate nonstabiliserness.
    }
    \label{F:critical-parameters}
\end{figure}

The results discussed here are fundamental: they represent upper bounds on nonstabiliserness generation under energy conservation that do not depend on any particular control scheme. In the Bravyi–Kitaev model~\cite{Bravyi2005}, all the non-stabiliserness is contained in the ancillae, manipulated through Clifford operations. Our results remain operationally relevant as they represent fundamental limits that cannot be surpassed by any protocol respecting the same thermodynamic constraints. In representative cases, most notably energy-diagonal inputs and Hamiltonian direction $H\propto X+Y+Z$ where the closed-form witness in Eq.~\eqref{Eq:magic-generation-incoheren} and the critical inverse temperature in Eq.~\eqref{Eq:critical-temperature} apply, simple Clifford-based routines (Pauli-frame alignment and optional Pauli twirling) together with a single thermal interaction can approach the thermal operations boundary that triggers nonstabiliserness. This illustrates the practical relevance of our bounds, while noting that full code-specific implementation of the thermal interaction is architecture and protocol-dependent.

\section{Optimal regimes}\label{Sec:optimal}
Critical inverse temperatures for general qubit inputs can be obtained analytically once the Hamiltonian is fixed. Coherence never raises, and often lowers, the inverse-temperature threshold required for nonstabiliserness. The relation between critical temperature and coherence depends on how the thermally reachable set meets the stabiliser boundary for a given Hamiltonian direction. When the contact point depends only on populations--specifically for a Hamiltonian aligned with the $\ket{\ms{T}}$ Bloch axis [see Eq.~\eqref{eq:t}]--adding coherence does not change the threshold, and $\beta_{\text{crt}}$ remains flat in $c$ (see the red curve in the left panel of Fig.~\ref{F:critical-parameters}). As the Hamiltonian tilts, the relevant boundary becomes sensitive to coherence: increasing $c$ lowers the critical inverse temperature. More misaligned orientations start with a higher threshold at $c=0$, but gain the most from coherence, yielding the steepest drop (blue and yellow curves in the left panel of Fig.~\ref{F:critical-parameters}).

Next we turn the question around: at a given inverse temperature $\beta$, how much coherence is needed to tip the state outside the stabiliser polytope? For each Hamiltonian direction we define $c_{\text{crt}}(\beta)$ as the smallest coherence that still produces nonstabiliserness at that temperature. The right panel of Fig.~\ref{F:critical-parameters} traces this requirement. For the $H_\ms{T}$ orientation (red), the curve is flat: the contact with the stabiliser boundary is set purely by populations, so changing the temperature does not change how much coherence is needed. Once the inverse temperature reaches the threshold $\beta_{\text{crt}}$, the coherence requirement drops to zero--beyond this point, the Hamiltonian generates nonstabiliserness even without coherence. For tilted directions (blue and yellow), the story is different. As the bath gets colder, the thermally reachable set extends farther along the energy axis, and the required coherence steadily decreases. Each curve eventually meets its own $\beta_{\text{crt}}$; past that vertical mark, $c_{\text{crt}}$ remains at zero, meaning that cooling alone suffices, and no coherence is needed thereafter.

\begin{figure}[t]
    \centering
    \includegraphics{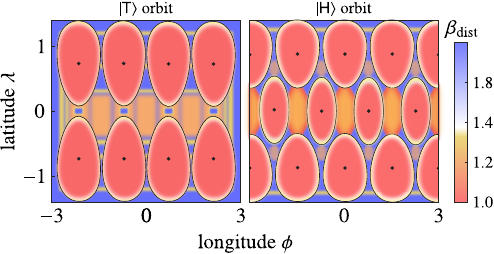}
    \caption{\emph{Distillability landscape over Hamiltonian orientations}. For each Hamiltonian direction $\hat{\v n}$, we consider an initial stabiliser state whose energy-basis parameters are $p = 0.3$ and $c=0.1$. Each point on the equirectangular map (longitude $\phi$, latitude $\lambda$) corresponds to a Hamiltonian direction. Colours show the critical inverse temperature $\beta_{\rm dist}$ at which the maximal thermally reachable fidelity first exceeds the distillation threshold [$\ket{\ms{T}}$ orbit (left) and $\ket{\ms{H}}$ orbit (right)]. The red lobes are the easiest orientations: they are centred on the Clifford‑equivalent directions of the relevant orbit (black dots), eight for $\ms{T}$ and twelve for $\ms{H}$.}
    \label{F:distability-plot}
\end{figure}

\begin{figure*}
    \centering
    \includegraphics{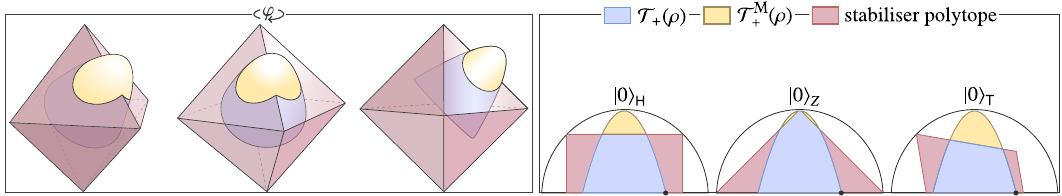}
    \caption{\emph{Reachability under thermodynamic processes}. Achievable states under thermal operations at $\beta = 2$, shown from three viewing angles alongside the stabiliser polytope. Starting from a stabiliser state with $p = 0.5$, $c = 0.25$, and Hamiltonian direction $\hat{\v{n}}_{\ms H} = (\tfrac{1}{\sqrt{2}}, \tfrac{1}{\sqrt{2}}, 0)$, the yellow region marks the produced magic states. Right: for the same initial state (black dot), upper hemispheres of fixed Bloch-ball cross-sections are shown for three Hamiltonians: $\hat{\v{n}}_{\ms H}$, $\hat{\v{n}}_Z = (0, 0, 1)$, and $\hat{\v{n}}_{\ms T} = (\tfrac{1}{\sqrt{3}}, \tfrac{1}{\sqrt{3}}, \tfrac{1}{\sqrt{3}}).$}
    \label{F:thermal-cone-plot}
\end{figure*}

Up to this point we have shown when and how a bath can push stabiliser inputs beyond the stabiliser set, and we identified the temperature–coherence–geometry tradeoffs that control this transition. However, merely being a nonstabiliser state does not guarantee usefulness for computation. What matters operationally is whether the bath can deliver states that cross known universality thresholds. The operational benchmark is magic-state distillation~\cite{Bravyi2005}: the relevant targets are the Clifford orbits of the single-qubit states
\begin{align}
\label{eq:h}
    \ketbra{\ms{H}}&=\frac{1}{2}\,\qty(\iden+\frac{X+Z}{\sqrt{2}}),\\
    \label{eq:t}
    \ketbra{\ms{T}}&=\frac{1}{2}\,\qty(\iden+\frac{X+Y+Z}{\sqrt{3}}).
\end{align}
Bravyi \& Kitaev showed that~\cite{Bravyi2005}, after optimising over single-qubit Clifford unitaries, computational universality is achieved once the overlap with the corresponding orbit exceeds a protocol-dependent threshold: $f_{\rm thr}^{(T)}\approx 0.91$ and $f_{\rm thr}^{(H)}\approx 0.93$, whereas states with $f_{\rm stab}< f < f_{\rm thr}$ exhibit bound nonstabiliserness and remain undistillable~\cite{Campbell2010}. To connect our thermodynamic analysis to this benchmark, we compute, for each Hamiltonian direction, the minimal inverse temperature $\beta_{\rm dist}^{(o)}(\hat{\v n})$ at which the best thermally reachable fidelity with the $o\in\{\ms{T},\ms{H}\}$ orbit crosses $f_{\rm thr}^{(o)}$ (see Appendix~\ref{App:distillability}). The resulting distillability landscape (see Fig.~\ref{F:distability-plot}) complements Result~\ref{Thm:magic-generation} by translating the witness into a universality threshold and reveals a pronounced dependence on Hamiltonian orientation, reinforcing the trends shown in Fig.~\ref{F:critical-parameters}.

The colour scale encodes how cold one must go to cross the distillation threshold: red regions correspond to smaller $\beta_{\rm dist}$ (easier access to distillable nonstabiliserness), while colder colours indicate larger $\beta_{\rm dist}$. The red lobes mark orientations where thermal trajectories intersect the stabiliser boundary in a manner most favourable to the selected target; they are centred on the Clifford-equivalent directions of that orbit (black dots) and inherit their multiplicities (eight for $\ms{T}$, twelve for $\ms{H}$). Moving away from a lobe centre the colour warms, reflecting the increasing inverse temperature required as the Hamiltonian direction becomes less aligned with the target geometry. The shapes and extents of the lobes quantify this tolerance to misalignment and make explicit the strong orientation dependence anticipated by Fig.~\ref{F:critical-parameters}: some directions enable distillable nonstabiliserness at comparatively high temperatures, whereas others require significantly deeper cooling.

A recurring central ingredient in our discussion is the system Hamiltonian. The preceding analysis points to a distinguished choice: $H\propto X+Y+Z$. Geometrically, choosing a fixed Hamiltonian with direction parallel to the $\ket{\ms{T}}$ Bloch axis makes the thermally reachable set intersect the stabiliser boundary in the most favourable way for both our witness and distillation, which is why the blue lobes in Fig.~\ref{F:distability-plot} are centred on the Clifford images of this direction. The $\ms{T}$ orientation minimises the cooling needed for distillable nonstabiliserness and, once the temperature threshold is met, renders coherence nonessential. In Appendix~\ref{App:optimal-Hamiltonian}, we provide a formal proof that this Hamiltonian is thermodynamically optimal for generating nonstabiliserness.

Finally, after asking whether thermal operations can generate magic states from a stabiliser input $\rho$ at inverse temperature $\beta$, we now ask where they can take us—to which nonstabiliser states. This can be determined by constructing the set of states reachable by thermal operations and taking its set difference with the stabiliser polytope, $\mathcal{T}^{\M}_+(\rho):=\mathcal{T}_+(\rho)\setminus\stab$. We call $\mathcal{T}^{\M}_{+}(\rho)$ the nonstabiliser cone. Any state $\rho$ in $\mathcal{T}^{\M}$ satisfies $\M(\rho)>1$. Unlike the witness, a binary indicator of whether nonstabiliserness can be generated, the nonstabiliser cone tells us which magic states are reachable and how large that set is inside the Bloch ball. This suggests a figure of merit: the normalised volume of the nonstabiliser cone, which measures how much of the Bloch ball becomes thermodynamically accessible as nonstabiliserness. Both temperature and the Hamiltonian play a role here. The Hamiltonian sets the direction in which the thermally reachable set grows, while the temperature governs whether it grows or shrinks. For a fixed initial state, the reachable set exhibits two regimes. When $p<\gamma$, its volume grows monotonically with $\beta$: as the bath cools, the thermal state moves toward the ground state, and because the thermal state lies in $\mathcal{T}_+(\rho)$, the reachable set stretches along the energy axis, thus growing the cone and generating nonstabiliserness. Conversely, when $p>\gamma$, the cone volume can shrink as $\beta$ increases until the crossing where $\gamma(\beta)=p$. Beyond this threshold the cone grows, the nonstabiliserness region appears, and its volume increases with further cooling. The Hamiltonian orientation has a similar effect: the closer it is to the optimal direction, the more efficiently it generates magic states, resulting in a larger cone volume (see Fig.~\ref{F:thermal-cone-plot}).

\section{Discussion \& outlook}\label{Sec:discussion}
Does extending the results to higher-dimensional systems reveal anything new? While moving to qudits undeniably enriches stabiliser geometry and Clifford structure, the qubit setting already captures the mechanism relevant here. Under thermal operations~\eqref{Eq:thermal-operations}, the interplay of thermomajorisation, time-translation symmetry, and stabiliser facets is completely exposed in one qubit: the stabiliser boundary is explicit~\eqref{Eq:stabiliser-condition}. Our witness $\M(\rho)$ provides a necessary-and-sufficient test for nonstabiliserness generation, and for energy-diagonal inputs, the criterion collapses to a closed form~\eqref{Eq:magic-generation-incoheren} that gives analytic thresholds~\eqref{Eq:critical-temperature}. Importantly, our results are model-independent: we make no assumptions about interaction strength (weak or strong), structure (local or collective), or duration relative to natural time scales. Therefore, they provide fundamental bounds for any energy-conserving unitary model, regardless of microscopic details.

While we briefly discussed representative cases where Clifford-only logical control can approach these bounds, establishing constructive, code-specific protocols and general tightness is an interesting direction for future work. A natural extension motivated by fault-tolerant quantum computation would be to consider \emph{Clifford-assisted thermal operations}, where both the system control and system-bath interactions are restricted to Clifford operations. Another natural next step is to include a catalyst and explore what advantages catalytic transformations can offer~\cite{Jonathan1999,Datta2022,Patrykreview}. Catalysis enlarges the set of reachable states~\cite{Czartowski2023,Son2024CETO,Son2024H,Czartowski2024}, which should, in turn, enable access to a broader range of magic states, as briefly discussed in Appendix~\ref{App:catalytic}.

Beyond generating nonstabiliserness, an intriguing question is how to detect nonstabiliserness under minimal assumptions~\cite{macedo2025witnessingmagicbellinequalities,junior2025geometricanalysisstabilizerpolytope,Zamora2025,varela2026predictingmagicmeasurements,leone2026unbearablehardnessdecidingmagic}. Similarly, we can ask whether thermodynamic processes can be used for this task. The recent letter~\cite{deOliveiraJunior2025heat} proposes a heat-based certification procedure that is, in principle, applicable to arbitrary quantum resources. Whether this framework can be leveraged to probe nonstabiliserness remains open. 

Our results are directly implementable with current experimental platforms. Nuclear resonance (NMR) systems~\cite{vandersypen2004nmr} and cavity QED setups with superconducting qubits~\cite{Blais2021,Junior2024} have already demonstrated energy-preserving unitaries~\cite{Serra2014,serra2016,cottet2017observing,Micadei2019}, offering precise control over both the initial states and the Hamiltonians. Looking beyond proof-of-principle demonstrations, an important next step is to investigate whether our methods can provide practical advantages for quantum computing within these architectures. 

\section{Acknowledgments}
We thank Ryuji Takagi, Daniel Stilck França \& Jake Xuereb for insightful comments on the manuscript. A. O. J. and J. B. B. acknowledges financial support from the Danish National Research Foundation grant bigQ (DNRF 142), VILLUM FONDEN through a research grant (40864). 
J.Cz. acknowledges support by the start-up grant of the
Nanyang Assistant Professorship at the Nanyang Technological University in Singapore, awarded to Nelly Ng. R. A. M. acknowledges funding from Coordenação de Aperfeiçoamento de Pessoal de Nível Superior – Brasil (CAPES) – Finance Code 001. R. C. acknowledges the Simons Foundation (Grant No. 1023171, R.C.), the Brazilian National Council for Scientific and Technological Development (CNPq, Grants No.403181/2024-0, and 301687/2025-0), the Financiadora de Estudos e Projetos (Grant No. 1699/24 IIF-FINEP) and a guest professorship from the Otto M\o nsted Foundation.

\bibliography{2-citations}

\appendix
\onecolumngrid

\section{General theoretical recap}

In order to make the work self-contained, in this appendix we provide a compact review of the two frameworks used in the paper: Stabiliser theory~(see Ref~\cite{Nielsen2010} for a detailed discussion) \& Thermal operations~(see Ref~\cite{junior2024geometricinformationtheoreticaspectsquantum} for a detailed discussion).

\subsection{Stabiliser theory in a nutshell}\label{App:stabiliser-theory}

The starting point is the $n$-qubit Pauli group $\mathcal P_n = \{\pm 1, \pm i\}\cdot\{\iden,X,Y,Z\}^{\otimes n}$, whose elements are tensor products of single-qubit Pauli operators. A stabiliser state on $n$-qubits is a pure state that is the joint $+1$ eigenstate of an abelian subgroup $S\subset \mathcal{P}_n$ of size $|S| = 2^n$. 

For a single qubit, pure stabiliser states are precisely the eigenstates of the Pauli operators. Writing the Pauli vector $\v \sigma =(X,Y,Z)$ and the unit Bloch vectors $\v x,\v y,\v z$ along the Cartesian axes, the six pure stabiliser states are $\qty{\tfrac{\iden + \v a \cdot \v{\sigma}}{2}}_{\v a \in \{\pm \v x, \pm \v y, \pm \v z\}}$. The stabiliser polytope is then defined as
\begin{equation}
    \stab := \mathrm{conv} \left\{ \frac{\iden + \v a \cdot \v{\sigma}}{2}\right\}_{\v a \in \{\pm \v x, \v y, \v z\}},
\end{equation}
namely the convex mixtures of the six Pauli eigenstates~\cite{Veitch2012,Heinrich2019}. Writing a general state as $\rho=\tfrac{1}{2}\left(\iden + \v r\cdot \v \sigma \right)$, with $\v r =(r_x, r_y, r_z)$ and $\v\sigma =(X,Y,Z)$, membership of $\stab$ is characterised by the single facet inequality
\begin{equation}
    |r_x|+|r_y|+|r_z|\leq 1,
\end{equation}
which compactly captures the eight octahedral facets. The geometry of the stabiliser polytope is shown in Fig.~\ref{F-stabiliser-polytope}. The stabiliser set is closed under the standard free operations of the resource theory: Clifford unitaries $C\in U(2)$ that permute Pauli operators, and measurements in the computational basis; compositions of these are stabiliser operations~\cite{Veitch2012}.

We quantify deviation from stabiliser polytope via the geometric monotone given by the minimal trace distance to the polytope, which we refer to as the nonstabiliserness of $\rho$:
\begin{equation}\label{Eq:app-nonstabiliserness-monotone}
    \mathcal N\mathcal S(\rho) = \min_{\sigma \in \stab} T(\rho, \sigma),
\end{equation}
where $T(\rho, \sigma ) = \tfrac12\| \rho-\sigma\|_1$ is the trace distance~\cite{Nielsen2010}. 

The computational model we consider assumes fault-tolerant access to Clifford dynamics, together with standard stabiliser primitives. Concretely, we allow \emph{(i)} $n$-qubit Clifford unitaries
\begin{equation}
    \mathcal C \ell_n = \{C \in U(2^n)| C P C^\dagger \in \mathcal P_n,  \forall P \in \mathcal P_n\},
\end{equation}
where $\mathcal P_n$ is the $n$-qubit Pauli group; \emph{(ii)} the preparation of $m$ ancillas $\ket{0}^{\otimes m}$; and \emph{(iii)} Pauli measurements. In addition, we assume access to a pool of $n$ noisy single-qubit magic states $\rho(f, \v a)^{\otimes n}$ of the form
\begin{equation}
    \rho(f, \v a) = \frac{\iden + (2f-1)\v a \cdot \v{\sigma}}{2},
\end{equation}
where $\v a \in S^2$ denotes the Bloch vector of a target pure state $\ket{\psi(\v a)}$ and $f = \bra{\psi(\v a)} \rho(f, \v a) \ket{\psi (\v a)}$ is the fidelity; $f=1$ yields $\ket{\psi(\v a)}^{\otimes n}$, while $f=\tfrac12$ gives the maximally mixed state.

Under these assumptions, the computation consists of performing computational-basis measurements on
\begin{equation}
     C (\ketbra{0}^{\otimes m}\otimes \rho(f,\v a)^{\otimes n})C^\dagger.
 \end{equation}
with $C$ a Clifford unitary acting on $n+m$ qubits. The usefulness of $\rho(f,\v a)$ is governed by fidelity thresholds. There exists $f_{\stab}$ such that, if $f \le f_{\stab}$, then $\rho(f,\v a)\in\stab$, in which case the resulting statistics are classically simulable by the Gottesman–Knill theorem~\cite{gottesman1998, Aaronson2004}. Conversely, Bravyi–Kitaev~\cite{Bravyi2005} established that, for target states taken from the Clifford orbits of
\begin{align}
    \ketbra{\ms{T}} &= \frac 1 2 \left(\iden + \frac{X+Y+Z}{\sqrt 3 }\right),\\
    \ketbra{\ms{H}} &= \frac 1 2 \left(\iden + \frac{X+ Z}{\sqrt 2}\right).
\end{align}
there exists a threshold $f_\mathrm{thr}>f_{\stab}$ above which the model becomes computationally universal via magic-state distillation. In particular, universality is attained via the Bravyi-Kitaev protocol~\cite{Bravyi2005} once the orbit-optimised fidelities satisfy
\begin{align}
    \max_{C \in \mathcal C \ell_1} \sqrt{\bra{\ms{T}} C \rho C^\dagger \ket{\ms{T}}} \gtrsim 0.91    ,\\
    \max_{C \in \mathcal C \ell_1} \sqrt{\bra{\ms{H}} C \rho C^\dagger \ket{\ms{H}}} \gtrsim 0.93    .
\end{align}
Moreover, $f_\mathrm{thr} > f_{\stab}$ strictly for all protocols~\cite{Campbell2010}, so states with $f_{\stab}<f<f_\mathrm{thr}$ possess bound nonstabiliserness and remain undistillable. This delineates the operational region relevant to the main text, where thermal processes are shown to move stabiliser inputs beyond $\stab$ and, in favourable regimes, across known distillation thresholds without revisiting the broader stabiliser-theory exposition already provided.

\subsection{Thermal operations in a nutshell}\label{App:coherent-thermal-cones}

In this appendix we provide a review of technical details on the characterisation of states reachable under thermal operations. We follow the convention as set in the main text: a system with Hamiltonian $H$ initially in state $\rho$, a bath with Hamiltonian $H_{\ms E}$ prepared in a Gibbs state $\gamma_{\ms{E}} = \exp(-\beta H_{\ms{E}})$ at inverse temperature $\beta$, and an energy-preserving unitary $U$ satisfying $[U,H+H_{\ms E}]=0$, which induces the map in Eq.~\eqref{Eq:thermal-operations}, namely
\begin{equation}
\label{eq:thermal_ops}
\E(\rho)=\tr_{\ms E}\left[U\left(\rho\otimes\gamma_\ms{E}\right)U^{\dagger}\right].
\end{equation}
The energy-conservation condition encodes the first law of thermodynamics. Moreover, since the heat bath is in thermal equilibrium, every thermal operation $\E$ preserves the system’s Gibbs state, $\E(\gamma)=\gamma$ defined in full analogy to the environment's Gibbs state. Consequently, this Gibbs-preserving property, together with any contractive distance measure $\delta$ yielding $\delta(\rho,\gamma)\geq\delta[\E(\rho),\E(\gamma)]=\delta[\E(\rho),\gamma]$, captures the core content of the second law by formalising the evolution towards thermal equilibrium.

The set of states achievable under Eq.~\eqref{Eq:thermal-operations} is the future thermal cone, denoted $\mathcal{T}_+(\rho)$. Two structural facts allow us to construct the qubit future thermal cone by treating populations and coherences separately and then combining them. First, the diagonal (energy populations) evolves under Gibbs-preserving, energy-conserving stochastic maps and is exactly constrained by thermomajorisation. Second, time-translation symmetry bounds the magnitude of the off-diagonal terms independently of their phase, yielding a tight coherence constraint for any prescribed population change.

For a qubit with ground-state population $p$, coherence $c$, and associated thermal state $\tau_\beta := e^{-\beta H}/\tr(e^{-\beta H})$—the Gibbs state of $H$, which in the energy eigenbasis of $H$ is $\operatorname{diag}(\gamma,1-\gamma)$—the attainable populations and coherences under thermal operations are as follows:

\begin{lem}[Reachable populations~\cite{Lostaglio2018}]\label{Lem:population-evolution}
Under thermal operations at inverse temperature $\beta$, the set of reachable populations is the line segment
\begin{equation} 
\mathcal{P}_\beta(p):=\operatorname{conv}\qty[\qty(p,1-p), \qty(1-\frac{1-\gamma}{\gamma}p,\frac{1-\gamma}{\gamma}p)]. 
\end{equation}
\end{lem}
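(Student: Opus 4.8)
The plan is to strip the statement down to a classical transition between two-level probability distributions and solve it explicitly. Since a thermal operation is covariant under the time translations generated by $H$, it commutes with the energy-eigenbasis dephasing map; hence a covariant channel maps energy-diagonal states to energy-diagonal states, and the population vector of $\mathcal{E}(\rho)$ coincides with that of $\mathcal{E}(\mathrm{diag}\,\rho)$. So the set of population vectors reachable from $(p,1-p)$ is exactly the set of diagonal states reachable from $\mathrm{diag}\,\rho$, and coherence is irrelevant here. Combined with the Gibbs-preserving property, the standard single-shot characterisation of athermality identifies this set with the thermomajorisation cone with apex $(p,1-p)$ at inverse temperature $\beta$ --- the distributions $(q,1-q)$ that are thermomajorised by $(p,1-p)$ --- so it remains only to compute this cone for a qubit, cf.~\cite{Lostaglio2018}.

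For two levels the computation is short. A column-stochastic matrix $T=\bigl(\begin{smallmatrix}1-a & b\\ a & 1-b\end{smallmatrix}\bigr)$ fixes the Gibbs vector $(\gamma,1-\gamma)$ if and only if $a\,\gamma=b\,(1-\gamma)$; since $\gamma\ge 1/2$ for $\beta\ge 0$, this leaves a single admissible parameter $a\in[0,(1-\gamma)/\gamma]$ with $b=\gamma a/(1-\gamma)$. Applying $T$ to $(p,1-p)$ gives a ground-state population $q=p+a\,(\gamma-p)/(1-\gamma)$, which is affine in $a$, so as $a$ sweeps its interval $q$ sweeps the segment between its two endpoint values. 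At $a=0$ one recovers $(p,1-p)$ (the identity channel); at the extreme $a=(1-\gamma)/\gamma$ one gets $q=1-\tfrac{1-\gamma}{\gamma}p=:q^\star$, i.e. the point $\bigl(1-\tfrac{1-\gamma}{\gamma}p,\ \tfrac{1-\gamma}{\gamma}p\bigr)$. Hence the reachable populations form precisely $\operatorname{conv}\bigl[(p,1-p),(q^\star,1-q^\star)\bigr]=\mtg{P}_\beta(p)$, with the orientation of the segment (whether $q$ can only increase or only decrease) fixed by the sign of $\gamma-p$: cooling toward the ground state when $p<\gamma$, relaxing away from it when $p>\gamma$.

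The step that carries the weight --- and the one I would flag as the main obstacle if proving everything from scratch --- is the \emph{achievability}/tightness direction: that every Gibbs-preserving stochastic matrix above is actually the diagonal action of some energy-preserving global unitary $U$ on system plus bath, so that the segment is attained and not merely an outer bound. For a qubit this is the classical content of thermomajorisation (realise the extremal map $a=(1-\gamma)/\gamma$ by a partial level thermalisation against a resonant bath mode, and compose), and it can simply be cited; but it is worth stating explicitly, since in general Gibbs-preserving maps strictly outstrip thermal operations. As a consistency check I would also draw the two-level thermomajorisation curves of $(p,1-p)$ and $(q,1-q)$ --- each a two-segment broken line with its elbow at $x=\gamma$ or $x=1-\gamma$ according to which of $p/\gamma$ and $(1-p)/(1-\gamma)$ dominates --- and verify that the $q$-curve stays weakly below the $p$-curve exactly for $q$ between $p$ and $q^\star$, with the extreme case $q=q^\star$ being the one where the target's elbow lies on the source's curve; this recovers the same endpoint and closes the argument.
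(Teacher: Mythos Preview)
The paper does not prove this lemma at all: it is stated as a known result with a citation to~\cite{Lostaglio2018} and used without further argument. Your proposal therefore supplies strictly more than the paper does, and the argument you give is the standard one---reduce to the diagonal sector via time-translation covariance, parametrise the one-parameter family of $2\times2$ Gibbs-stochastic matrices, read off the affine image of $(p,1-p)$, and invoke the thermomajorisation achievability result to close the gap between Gibbs-preserving stochastic maps and genuine thermal operations. The computations are correct (in particular $q=p+a(\gamma-p)/(1-\gamma)$ with $a\in[0,(1-\gamma)/\gamma]$ gives exactly the stated segment), and you are right to flag achievability as the one step that is not elementary and must be cited; for a qubit it is indeed classical and can be realised by a single $\beta$-swap against a resonant bath mode. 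The only cosmetic remark is that your phrase ``relaxing away from it when $p>\gamma$'' slightly undersells what happens: the reachable interval $[q^\star,p]$ actually passes through $\gamma$ and overshoots it, since $q^\star<\gamma<p$ in that regime.
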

Projecting $\mathcal{P}_\beta(p)$ onto its first coordinate collects the reachable ground-state populations into the interval
\begin{equation}
I_\beta(p):=\qty[\min\{p,\,q_\star\},\,\max\{p,\,q_\star\}],
\end{equation}
where $q^\star:=1-\tfrac{1-\gamma}{\gamma}p$. We now state the corresponding coherence bound.
\begin{lem}[Reachable coherences~\cite{Lostaglio2015}]\label{Lem:coherence-evolution}
Under thermal operations, the coherence $r$ achievable for a target qubit with population $q$ (from an initial qubit with population $p$ and coherence $c$) is bounded by $r\le\v{c}{\max}(q)$, where
\begin{equation}
\v{c}_{\max}(q)\!:=\!|c| \frac{\sqrt{[q(1-\gamma)\!-\!\gamma(1-p)][p(1-\gamma)\!-\!\gamma(1-q)]}}{|p-\gamma|}.
\end{equation}
\end{lem}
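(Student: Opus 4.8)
The plan is to strip an arbitrary thermal operation down to two elementary objects---a Gibbs-stochastic matrix acting on the energy populations and a single complex scalar rescaling the coherence---and then to bound the scalar in terms of the matrix. I would begin from time-translation symmetry: since $[U,H+H_{\ms E}]=0$ and the bath's Gibbs state commutes with $H_{\ms E}$, a short computation shows that the channel $\E$ of Eq.~\eqref{Eq:thermal-operations} is covariant, $\E(e^{-iHt}\rho\,e^{iHt})=e^{-iHt}\E(\rho)\,e^{iHt}$ for all $t$. For a qubit, covariance forbids mixing between the zero Bohr-frequency mode (the diagonal) and the $\pm\omega$ modes (the off-diagonals): the diagonal $(\rho_{00},\rho_{11})$ is mapped by a column-stochastic matrix $T$, while $\rho_{01}\mapsto\lambda\,\rho_{01}$ for some $\lambda\in\mathbb{C}$ (and $\rho_{10}\mapsto\overline{\lambda}\,\rho_{10}$ by hermiticity). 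Hence a target with population $q$ and coherence $r$ is reachable from $(p,c)$ only if $q=T_{00}\,p+T_{01}(1-p)$ and $r=|\lambda|\,|c|$, and the whole question collapses to: over all such covariant thermal channels whose diagonal block reaches $q$, how large can $|\lambda|$ be?

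Next I would fix $T$. Gibbs-preservation $\E(\tau)=\tau$ forces $T$ to fix $(\gamma,1-\gamma)$, leaving the one-parameter family $T_{00}=1-a$, $T_{01}=\tfrac{a\gamma}{1-\gamma}$; since $q=(1-a)p+\tfrac{a\gamma}{1-\gamma}(1-p)=p+a\,\tfrac{\gamma-p}{1-\gamma}$ is strictly monotone in $a$, the target population determines $a$, hence $T$, uniquely---this also reproduces the admissible window $I_\beta(p)$ of Lemma~\ref{Lem:population-evolution}, with its far endpoint $q^\star$ corresponding to $T_{11}=0$. To bound $|\lambda|$ for this fixed $T$, I would use that a covariant channel admits a Kraus representation whose operators all have a definite Bohr frequency (a standard consequence of the mode decomposition of covariant maps)---on a qubit, purely diagonal, purely raising, or purely lowering. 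Only the diagonal ones $D_i=\operatorname{diag}\big((D_i)_{00},(D_i)_{11}\big)$ feed the off-diagonal, so $\lambda=\sum_i (D_i)_{00}\,\overline{(D_i)_{11}}$, while $T_{00}=\sum_i|(D_i)_{00}|^2$ and $T_{11}=\sum_i|(D_i)_{11}|^2$, the raising/lowering weights fixing the remaining entries by column-stochasticity. Cauchy--Schwarz then gives $|\lambda|\le\sqrt{T_{00}T_{11}}$, hence $r\le|c|\,\sqrt{T_{00}(q)\,T_{11}(q)}$. The final step is to substitute $T_{00}=\tfrac{q(1-\gamma)-\gamma(1-p)}{p-\gamma}$ and $T_{11}=\tfrac{p(1-\gamma)-\gamma(1-q)}{p-\gamma}$, read off from $a=a(q)$, and simplify; this reproduces $\mtg{c}_{\max}(q)$ verbatim.

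For the converse I would exhibit, for each $q\in I_\beta(p)$, a thermal operation attaining the bound. At the level of Gibbs-preserving covariant channels this is immediate: take the single diagonal Kraus operator $\operatorname{diag}\big(\sqrt{T_{00}},\sqrt{T_{11}}\big)$ and complete it to a channel with one lowering and one raising Kraus operator of weights $1-T_{11}$ and $1-T_{00}$; this map fixes $(\gamma,1-\gamma)$ and has $|\lambda|=\sqrt{T_{00}T_{11}}$ exactly. I would then realise this channel as a bona fide thermal operation through a Stinespring dilation with an explicitly constructed bath and energy-conserving unitary, following Refs.~\cite{Lostaglio2015,Horodecki2015}.

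The main obstacle is precisely this converse. The inequality $|\lambda|\le\sqrt{T_{00}T_{11}}$ is only a necessary consequence of covariance and complete positivity, and the obvious ``thermalising'' interaction---which induces a generalised amplitude-damping channel---is strictly suboptimal: for $q$ between $p$ and $\gamma$ it yields only $|\lambda|=\sqrt{(\gamma-q)/(\gamma-p)}<\sqrt{T_{00}T_{11}}$. Saturating $\mtg{c}_{\max}(q)$ therefore requires an interaction that transfers population without recording which-path information in the bath, and showing that such a unitary genuinely lies in the thermal-operations class---not merely in the larger class of Gibbs-preserving covariant maps, and with no catalyst needed for a single qubit---is the delicate part; everything else is standard symmetry analysis and a one-line computation.
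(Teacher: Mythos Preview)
The paper does not prove this lemma at all: it is imported verbatim from Ref.~\cite{Lostaglio2015} and stated without argument in Appendix~A-2, so there is no ``paper's own proof'' to compare against. Your proposal is in fact a faithful reconstruction of the standard argument in that reference---covariance decouples modes, Gibbs preservation fixes the diagonal block $T$ up to one parameter, Cauchy--Schwarz on the mode-diagonal Kraus operators gives $|\lambda|\le\sqrt{T_{00}T_{11}}$, and substitution of $T_{00},T_{11}$ in terms of $(p,q,\gamma)$ yields $\mtg{c}_{\max}(q)$---so the necessity direction is complete and correct.

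Your caveat about the converse is well placed and is precisely the point at which the argument ceases to be elementary. The channel you construct (one diagonal Kraus operator plus one raising and one lowering) is the extremal covariant Gibbs-preserving map, but showing it is a genuine thermal operation in the sense of Eq.~\eqref{Eq:thermal-operations}---rather than merely a time-translation-covariant Gibbs-preserving channel---requires an explicit bath dilation. For a single qubit this is known to go through (the references you cite give the construction), but it is not automatic: in higher dimensions the two sets differ, and even for qubits the ``obvious'' amplitude-damping dilation, as you correctly note, lands strictly inside the bound. So your assessment of where the real work lies is accurate; the rest is standard.
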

As a last observation, for qubits, the population interval $I_\beta(p)$ [Lemma~\ref{Lem:population-evolution}] together with the coherence bound $r\le \v{c}_{\max}(q)$ [Lemma~\ref{Lem:coherence-evolution}] gives an exact characterisation of $\mathcal{T}_+(\rho)$: for every $q\in I_\beta(p)$ the boundary value $r=\v{c}_{\max}(q)$ is attainable by a thermal operation, and since dephasing in the energy basis is itself a thermal operation and thermal operations are convex (closed under mixing), all intermediate $0\le r\le \v{c}_{\max}(q)$ are attainable at the same $q$. Time-translation covariance implies the coherence phase is freely adjustable, yielding full disks in the Bloch cross-sections.

\subsection{A thermodynamic bound on nonstabiliserness}\label{App:athermality-nonstabiliserness}

In this appendix, we derive an inequality that constrains the nonstabiliserness of any output state that is thermally reachable from a given input. This bound provides a necessary condition on what thermal operations can produce. Importantly, our derivation makes no assumptions about system size or dimension--the result applies equally to single qubits, qudits, and many-body systems.

Consider a quantum system prepared in a state $\rho$ and characterised by a Hamiltonian $H$. A standard athermality monotone in the resource theory of thermodynamics is the nonequilibrium free energy difference, $\Delta F_\beta(\rho):= F_\beta(\rho)-F_\beta(\tau_\beta)$, where $F_\beta(\rho):=\tr(\rho H)-\beta^{-1} S(\rho)$. Importantly, the nonequilibrium free energy difference can be expressed in terms of the quantum relative entropy of $\rho$ with respect to its associated thermal state $\tau_\beta=\exp(-\beta H)/Z$ with $Z=\tr[\exp(-\beta H)]$, i.e., 
\begin{equation}\label{Eq:app-relative-entropy}
    D(\rho\|\tau_\beta) :=\tr[\rho(\log \rho-\log\tau_\beta)] = \beta[F_\beta(\rho)-F_\beta(\tau_\beta)] = \beta \Delta F_\beta(\rho),
\end{equation}
One can verify that $D(\rho\|\tau_\beta)$ [and hence $\Delta F_\beta(\rho)$] is monotonically decreasing under thermal operations. More precisely, if $\mathcal{E}$ be a thermal operation, the the data-processing inequality gives
\begin{equation}\label{Eq:app-monotonicity}
    \Delta F_\beta [\mathcal{E}(\rho)] \leq \Delta F_\beta (\rho).
\end{equation}
This tells us that thermal operations, as free operations, cannot increase the athermality of a state. From now on, we use ``nonequilibrium free energy difference'' and ``athermality'' unchangingly.

As our measure of nonstabiliserness, we use the trace distance to the stabiliser polytope:
\begin{equation}
    \mathcal N\mathcal S(\rho) = \min_{\sigma \in \stab} T(\rho, \sigma) \quad \text{with} \quad T(\rho, \sigma ) = \frac12\| \rho-\sigma\|_1.
\end{equation}
A key property of this measure is captured by the following Lemma:
\begin{lem}[Existence and continuity bound of $\mathcal{NS}$]\label{Lemma:NS}
The minimum in the definition of $\mathcal{NS}(\rho)$ is attained for every state $\rho$. Moreover, for all states $\rho,\omega$,
\begin{equation}\label{Eq:app-lipschitz-rev}
    \bigl|\mathcal{NS}(\rho)-\mathcal{NS}(\omega)\bigr|\ \le\ T(\rho,\omega).
\end{equation}
\end{lem}
\begin{proof}
Since $\stab$ is compact and $T(\cdot,\cdot)$ is continuous, a minimiser exists. For the bound~\eqref{Eq:app-lipschitz-rev}, let $\sigma_\omega\in\stab$ be a minimiser for $\mathcal{NS}(\omega)$, so that $\mathcal{NS}(\omega)=T(\omega,\sigma_\omega)$. By the triangle inequality,
\begin{equation}
\mathcal{NS}(\rho)=\min_{\sigma\in\stab}T(\rho,\sigma) \le T(\rho,\sigma_\omega) \le T(\rho,\omega)+T(\omega,\sigma_\omega) = T(\rho,\omega)+\mathcal{NS}(\omega).
\end{equation}
Swapping $\rho$ and $\omega$ gives 
\begin{equation}\label{Eq:app-athermality-almost}
    \mathcal{NS}(\omega)\le T(\rho,\omega)+\mathcal{NS}(\rho),
\end{equation}
and combining the two inequalities gives~\eqref{Eq:app-lipschitz-rev}.
\end{proof}

Applying Lemma Eq~\ref{Lemma:NS} with $\omega = \tau_\beta$ gives a simple relation:
\begin{equation}\label{Eq:app-NS-bound-with-gamma}
    \mathcal{NS}(\rho) \leq \mathcal{NS}(\tau_\beta) + T(\rho, \tau_\beta)
\end{equation}
We now connect the trace distance to athermality via the quantum Pinsker inequality:
\begin{equation}\label{Eq:app-quantum-pinsker}
    T(\rho,\tau_\beta) \leq \sqrt{\frac{1}{2}D(\rho\|\tau_\beta)}
\end{equation}
Substituting this into \eqref{Eq:app-NS-bound-with-gamma} gives our central bound for a single state:
\begin{equation}\label{Eq:app-bound-single-state}
\mathcal{NS}(\rho) \leq \mathcal{NS}(\tau_\beta) + \sqrt{\frac{\beta}{2}\Delta F_\beta(\rho)}.
\end{equation}
This inequality shows that the deviation of $\mathcal{NS}(\rho)$ from its equilibrium value $\mathcal{NS}(\tau_\beta)$ is constrained by the state's athermality. Thus, the initial athermality upper-bounds the nonstabiliserness of every output state thermally reachable from $\rho$. Let $\sigma = \mathcal{E}(\rho)$ be any state reachable from $\rho$ by a thermal operation $\mathcal{E}$. Applying the bound~\eqref{Eq:app-trade} to the output $\sigma$ and using the monotonicity of athermality \eqref{Eq:app-monotonicity}, we obtain the desired input-output constraint:
\begin{equation}\label{Eq:app-trade}
    \mathcal{NS}(\sigma) \leq \mathcal{NS}(\tau_\beta) + \sqrt{\frac{\beta}{2}\Delta F_\beta(\rho)}
\end{equation}
Thus, the initial athermality $\Delta F_\beta(\rho)$ upper-bounds the nonstabiliserness of every possible output state thermally reachable from $\rho$.

A particularly instructive special case occurs when the Gibbs state $\tau_\beta$ itself is a stabiliser state, implying $\mathcal{NS}(\tau_\beta) = 0$. In this case, the constraint simplifies to
\begin{equation}
    \mathcal{NS}(\sigma) \leq \sqrt{\frac{\beta}{2}\Delta F_\beta(\rho)}.
\end{equation}
This gives a direct quantitative limitation: to generate an output with nonstabiliserness $\mathcal{NS}(\sigma) \geq m$ via thermal operations, the input $\rho$ must carry at least $\Delta F_\beta(\rho) \geq \frac{2}{\beta}m^2$ of athermality. Therefore, a significant amount of nonstabiliserness can only be produced from a correspondingly nonequilibrium input.

It is important to emphasize that the primary value of the bound in Eq.~\eqref{Eq:app-trade} lies in establishing a fundamental, dimension-independent connection between two distinct resource theories. Because our derivation is completely agnostic to the microscopic details of the Hamiltonian, the dimension of the Hilbert space, and the specific thermal operation applied, it acts as a universal thermodynamic witness rather than a tight state-conversion theorem. In many physical regimes, particularly at low temperatures or for highly misaligned Hamiltonians, this bound will inevitably be loose. Nevertheless, it is a natural and important question to ask how tight this necessary condition can be, and specifically, how much gap is introduced by the mathematical relaxation in our derivation (namely, the quantum Pinsker inequality). To assess the tightness of our bound, we now consider an explicit single-qubit example where the geometric conditions are perfectly optimal, allowing us to isolate and quantify the exact analytical slack.

Consider a qubit system governed by a Hamiltonian aligned with the optimal magic generation direction, $H = -\frac{\epsilon}{\sqrt{3}}(X+Y+Z)$. The corresponding thermal state is
$\tau_\beta = \tfrac{1}{2}\left(\iden + t_0 \frac{X+Y+Z}{\sqrt{3}}\right)$ with $\quad t_0 = \tanh(\beta\epsilon)$. We choose a specific inverse temperature such that $t_0 = \tfrac{1}{\sqrt{3}}$. At this temperature, the thermal state $\tau_\beta$ lies exactly on a facet of the stabiliser octahedron, implying that its initial nonstabiliserness is zero, $\mathcal{NS}(\tau_\beta) = 0$.

Now, consider a family of states parametrised by $t \in [t_0, 1]$ along the same axis: $\rho_t = \tfrac{1}{2}\left(\iden + t \frac{X+Y+Z}{\sqrt{3}}\right)$ Since all states in this family commute, their analysis is particularly straightforward. Let us evaluate the two inequalities that constitute our bound. First, we consider the geometric step from Lemma~\ref{Lemma:NS}. For $t \geq t_0 = \tfrac{1}{\sqrt{3}}$, the nonstabiliserness of $\rho_t$ is simply given by its trace distance to the boundary of the stabiliser polytope along this ray, which is exactly $\tau_\beta$. Thus, we have: $\mathcal{NS}(\rho_t) = T(\rho_t, \tau_\beta) = \tfrac{t - t_0}{2}$. This shows that the geometric inequality $\mathcal{NS}(\rho_t) \leq \mathcal{NS}(\tau_\beta) + T(\rho_t, \tau_\beta)$ is perfectly saturated (i.e., it holds as an exact equality).

Next, we evaluate the entropic step, which relies on the quantum Pinsker inequality~\eqref{Eq:app-quantum-pinsker}. Because $\rho_t$ and $\tau_\beta$ commute, the quantum relative entropy reduces to the classical Kullback-Leibler divergence between their eigenvalues: $D(\rho_t \| \tau_\beta) = \tfrac{1+t}{2} \log \tfrac{1+t}{1+t_0} + \tfrac{1-t}{2} \log \tfrac{1-t}{1-t_0}$. To understand the behaviour near equilibrium, we consider a small deviation from the thermal state, $t = t_0 + \delta$ with $\delta \ll 1$. Taylor expanding the relative entropy gives:
\begin{equation}
    D(\rho_t \| \tau_\beta) = \frac{\delta^2}{2(1-t_0^2)} + \mathcal{O}(\delta^3).
\end{equation}
Substituting this expansion into the right-hand side of Pinsker's inequality, we find:
\begin{equation}
    \sqrt{\frac{1}{2}D(\rho_t\|\tau_\beta)} \approx \sqrt{\frac{\delta^2}{4(1-t_0^2)}} = \frac{\delta}{2\sqrt{1-t_0^2}}
\end{equation}
Comparing this to the exact trace distance $T(\rho_t, \tau_\beta) = \tfrac{\delta}{2}$, the gap introduced by Pinsker's inequality is asymptotically given by the multiplicative factor:
\begin{equation}
    \frac{\sqrt{\frac{1}{2}D(\rho_t\|\tau_\beta)}}{T(\rho_t, \tau_\beta)} \approx \frac{1}{\sqrt{1-t_0^2}}.
\end{equation}
For our specific choice of $t_0 = \tfrac{1}{\sqrt{3}}$, this factor evaluates to $\sqrt{3/2} \approx 1.225$. Therefore, for states close to the thermal state along this optimal ray, the full bound evaluates to $\mathcal{NS}(\rho_t) \lesssim 1.225 \times T(\rho_t, \tau_\beta)$. This demonstrates that the bound is physically meaningful and close to tight, overestimating the true nonstabiliserness by a mere $22.5\%$ multiplicative gap in this regime.

\section{Proofs of the main result}
\subsection{Necessary \& Sufficient condition} \label{App-necessary-sufficient}

We begin with a geometric reformulation that characterises when the future thermal cone lies entirely within the stabiliser polytope. By rotating into the Hamiltonian-aligned frame, we may restrict our considerations to circular cross-sections of the Bloch sphere without loss of generality, thus reducing the problem to a support-function optimisation, separating population and coherence contributions. The resulting inequality is stated below and used in our bounds. It is established via two lemmas and a theorem:

\begin{lem}[Rotated-frame $\ell_1$ support over circles]\label{Lem:circle-support}
Let $H=\hat{\v{n}}\cdot\v\sigma$ be a qubit Hamiltonian and let a rotation $R\in SO(3)$ send $\hat{\v{n}}\mapsto\hat{\v z}$. Set $M:=R^{-1}=R^\top$ with columns $\v m_1,\v m_2,\v m_3$. For fixed $z'\in\mathbb R$ and $r_\perp\ge 0$, define $\v r'(\phi)=(r_\perp\cos\phi,r_\perp\sin\phi,z')$. Then
\begin{equation}
    \max_{\phi\in[0,2\pi)}\bigl\|M\,\v r'(\phi)\bigr\|_1
=\max_{\v{s}\in\{\pm1\}^3}\left(r_\perp\,\alpha_s+|z'|\,|h_s|\right),
\end{equation}
where $\alpha_s:=\sqrt{(s^\top\v m_1)^2+(s^\top\v m_2)^2}$ and $h_s:=s^\top\v m_3$. Since $M$ is orthogonal and $\v{s}\in\{\pm1\}^3$, we have $\alpha^2_s + h^2_s = 3$.
\end{lem}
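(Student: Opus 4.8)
The plan is to reduce the $\ell_1$ norm to a finite maximum of linear functionals, then optimise over the circle in closed form, and finally exploit a reflection symmetry of the sign vectors. First I would use the elementary identity $\|\v v\|_1=\max_{\v s\in\{\pm1\}^3}\v s^{\!\top}\v v$, valid for every $\v v\in\mathbb R^3$ by choosing $\v s$ to match the signs of the entries of $\v v$. Applied to $\v v=M\v r'(\phi)=r_\perp\cos\phi\,\v m_1+r_\perp\sin\phi\,\v m_2+z'\v m_3$, this rewrites $\|M\v r'(\phi)\|_1$ as a maximum over $\v s\in\{\pm1\}^3$ of $r_\perp\cos\phi\,(\v s^{\!\top}\v m_1)+r_\perp\sin\phi\,(\v s^{\!\top}\v m_2)+z'(\v s^{\!\top}\v m_3)$, which already separates a ``coherence'' part (the terms proportional to $r_\perp$) from a ``population'' part (the term proportional to $z'$).

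Next, since taking $\max_\phi\max_{\v s}$ is just a joint maximum over the product set, the two maxima commute; fixing $\v s$, the $\phi$-dependent part $A\cos\phi+B\sin\phi$ with $A=r_\perp(\v s^{\!\top}\v m_1)$, $B=r_\perp(\v s^{\!\top}\v m_2)$ attains maximum $\sqrt{A^2+B^2}=r_\perp\alpha_s$, and adding the $\phi$-independent term $z'h_s$ gives $\max_\phi\|M\v r'(\phi)\|_1=\max_{\v s\in\{\pm1\}^3}\bigl(r_\perp\alpha_s+z'h_s\bigr)$.

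The last step promotes $z'h_s$ to $|z'|\,|h_s|$: the index set $\{\pm1\}^3$ is invariant under $\v s\mapsto-\v s$, which fixes $\alpha_s$ (it depends only on the squares $(\v s^{\!\top}\v m_i)^2$) and flips the sign of $h_s$; pairing $\v s$ with $-\v s$ and keeping the larger of the two values turns $r_\perp\alpha_s+z'h_s$ into $r_\perp\alpha_s+|z'|\,|h_s|$, and maximising over the pairs yields the stated formula. The side identity $\alpha_s^2+h_s^2=3$ is immediate from orthogonality of $M$ (equivalently of $M^{\!\top}$): $\alpha_s^2+h_s^2=\sum_{i=1}^3(\v s^{\!\top}\v m_i)^2=\|M^{\!\top}\v s\|_2^2=\|\v s\|_2^2=3$.

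I do not expect a genuine obstacle here: the argument is elementary, and the only two points that warrant an explicit sentence are the (trivial) interchange of the two maxima and the sign bookkeeping via the $\v s\leftrightarrow-\v s$ symmetry in the final step.
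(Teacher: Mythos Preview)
Your proof is correct and follows essentially the same approach as the paper: both use the support-function identity $\|\v v\|_1=\max_{\v s\in\{\pm1\}^3}\v s^{\!\top}\v v$, commute the two maxima, optimise $A\cos\phi+B\sin\phi$ to $\sqrt{A^2+B^2}=r_\perp\alpha_s$, and then invoke the $\v s\leftrightarrow-\v s$ symmetry to replace $z'h_s$ by $|z'|\,|h_s|$. Your write-up is in fact slightly more complete, since you also supply the one-line justification of the side identity $\alpha_s^2+h_s^2=3$ via orthogonality of $M^{\!\top}$, which the paper merely asserts.
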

\begin{proof}
    Using the support-function identity $\|x\|_1 = \max_{s\in\{\pm 1\}} s^{\top} x$, we have $\|M \v r'(\phi)\|_1 = \max_{\v{s}\in\{\pm1\}^3} s^\top M \v r'(\phi)$. Thus, it follows that 
\begin{equation}
    \max_{\phi\in[0,2\pi)}\|M \v r'(\phi)\|_1 =\max_{\phi}\max_{\v{s}\in\{\pm1\}^3} s^\top M \v r'(\phi) =\max_{\v{s}\in\{\pm1\}^3}\max_{\phi} s^\top M \v r'(\phi),
\end{equation}
where we interchange the order of maximisation since $s$ ranges over a finite set and, for fixed $s$, $\phi\mapsto s^\top M \v r'(\phi)$ is continuous on the compact interval $[0,2\pi]$ (so both maxima are attained). Writing $M$ in terms of its columns, $M=\v m_1\ \v m_2\ \v m_3$ and $h_s:=s^\top \v m_3$, we obtain for fixed $s$
\begin{equation}
    s^\top M \v r'(\phi)=r_\perp\bigl(s^\top\v m_1\cos\phi+s^\top\v m_2\sin\phi\bigr)+z'h_s.
\end{equation}
Let $\alpha_s:=\sqrt{(s^\top\v m_1)^2+(s^\top\v m_2)^2}$. If $\alpha_s>0$, set $\tan\varphi_s=\frac{s^\top\v m_2}{s^\top\v m_1}$ so that $ s^\top M \v r'(\phi)=r_\perp\alpha_s\cos(\phi-\varphi_s)+z'h_s$, and hence $\max_{\phi}s^\top M \v r'(\phi)=r_\perp\alpha_s+z'h_s$ (attained at $\phi=\varphi_s$). If $s^\top\v m_1=s^\top\v m_2=0$, then $\alpha_s=0$ and the $\phi$-dependence vanishes, giving $\max_{\phi}s^\top M \v r'(\phi)=z'h_s$. In either case, $\max_{\phi}s^\top M \v r'(\phi)=r_\perp\alpha_s+z'h_s$. Since $\alpha_{-s}=\alpha_s$ and $h_{-s}=-h_s$, maximising over $s$ yields
\begin{equation}
    \max_{\phi}\|M \v r'(\phi)\|_1
=\max_{\v{s}\in\{\pm1\}^3}\bigl(r_\perp\alpha_s+|z'|\,|h_s|\bigr),
\end{equation}
\end{proof}

\begin{lem}[Single-$q$ stabiliser-containment test]\label{Lem:single-q}
Let $q\in[0,1]$ and let $r_\perp(q)$ be an admissible transverse radius at population $q$. In the rotated frame above, the entire circle
\begin{equation}
    \bigl\{\,(x',y',z')=(r_\perp(q)\cos\phi,\,r_\perp(q)\sin\phi,\,2q-1):\ \phi\in[0,2\pi)\,\bigr\}
\end{equation}
lies in the stabiliser polytope $\stab_R:=\{\v r':\|M\v r'\|_1\le 1\}$ iff
\begin{equation}
    \max_{\v{s}\in\{\pm1\}^3}\Bigl(\v{c}_{\max}(q)\,\alpha_s+\bigl|2q-1\bigr|\,\bigl|h_s\bigr|\Bigr)\ \le\ 1
\end{equation}
with $\v{c}_{\max}(q)$ defined in Lemma \ref{Lem:coherence-evolution}.
\end{lem}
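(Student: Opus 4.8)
The plan is to prove Lemma~\ref{Lem:single-q} as an almost immediate corollary of Lemma~\ref{Lem:circle-support}, combined with the definition of the rotated stabiliser polytope and the specialisation $r_\perp(q)=\mtg{c}_{\max}(q)$, $z'=2q-1$. First I would recall that a point $\v r'$ in the Hamiltonian-aligned frame lies in $\stab_R$ precisely when $\|M\v r'\|_1\le 1$, by definition of $\stab_R$; hence the \emph{entire} circle parametrised by $\phi$ lies in $\stab_R$ if and only if $\sup_{\phi\in[0,2\pi)}\|M\,\v r'(\phi)\|_1\le 1$. This reduces the containment question to a single scalar inequality involving the maximum of the support function over the circle.

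Next I would apply Lemma~\ref{Lem:circle-support} directly with $r_\perp=r_\perp(q)=\mtg{c}_{\max}(q)$ and $z'=2q-1$. That lemma evaluates $\max_{\phi}\|M\,\v r'(\phi)\|_1$ in closed form as $\max_{\v s\in\{\pm1\}^3}\bigl(r_\perp\,\alpha_s+|z'|\,|h_s|\bigr)$, where $\alpha_s$ and $h_s$ are the transverse and longitudinal components of $M^{\top}\v s$ as defined there. Substituting the chosen $r_\perp$ and $z'$ yields
\begin{equation}
\max_{\phi\in[0,2\pi)}\|M\,\v r'(\phi)\|_1
=\max_{\v s\in\{\pm1\}^3}\Bigl(\mtg{c}_{\max}(q)\,\alpha_s+\bigl|2q-1\bigr|\,\bigl|h_s\bigr|\Bigr),
\end{equation}
and combining this with the containment equivalence from the first step gives the stated iff condition. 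I would also note in passing that $\mtg{c}_{\max}(q)\ge 0$ (so it is a legitimate choice of transverse radius) and that the bound is \emph{tight} in the sense of Lemma~\ref{Lem:coherence-evolution}: for any admissible $r\le\mtg{c}_{\max}(q)$ the corresponding circle is contained in the circle of radius $\mtg{c}_{\max}(q)$, so no reachable coherent state at population $q$ escapes $\stab_R$ when the displayed inequality holds, while some reachable state does escape when it fails.

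This lemma is essentially a bookkeeping step, so there is no serious obstacle; the only point requiring a little care is the logical direction of the "iff". For the forward implication (circle in $\stab_R$ $\Rightarrow$ inequality) one uses that the maximum over $\phi$ is attained (continuity on a compact interval, as already invoked in the proof of Lemma~\ref{Lem:circle-support}), so containment of every point forces the maximum to be $\le 1$. For the reverse implication the inequality bounds the support function uniformly in $\phi$, hence every point of the circle satisfies $\|M\v r'\|_1\le 1$ and lies in $\stab_R$. I would keep the write-up to a few lines, citing Lemma~\ref{Lem:circle-support} for the closed-form maximum and Lemma~\ref{Lem:coherence-evolution} for the admissibility and tightness of $\mtg{c}_{\max}(q)$.
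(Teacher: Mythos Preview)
Your proposal is correct and follows exactly the paper's own argument: apply Lemma~\ref{Lem:circle-support} with $r_\perp=r_\perp(q)=\mtg{c}_{\max}(q)$ and $z'=2q-1$, then use the definition of $\stab_R$ as the $\ell_1$-ball under $M$ to turn uniform containment of the circle into the single scalar inequality. The paper compresses this into one sentence; your additional remarks on the iff direction and on tightness via Lemma~\ref{Lem:coherence-evolution} are fine elaborations but not required.
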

\begin{proof}
Apply Lemma~\ref{Lem:circle-support} with $r_\perp=r_\perp(q)$ and $z'=2q-1$, and use the definition of $\stab_R$ as an $\ell_1$-ball under $M$.
\end{proof}
\begin{thm}[TO no-stabiliserness criterion]\label{thm:TO-no-magic}
Let $I_\beta(p)$ be the TO-reachable population interval at inverse temperature $\beta$ starting from population $p$. Then the future thermal cone $\mathcal T_+(\rho)$ is contained in the stabiliser polytope iff
\begin{equation}
     \max_{q\in I_\beta(p)}\ \max_{\v{s}\in\{\pm1\}^3}
\Bigl(\v{c}_{\max}(q)\,\alpha_s+\bigl|2q-1\bigr|\,\bigl|h_s\bigr|\Bigr) \le 1.
\end{equation}
\end{thm}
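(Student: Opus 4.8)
The plan is to reduce the global containment statement $\mathcal{T}_+(\rho)\subseteq\stab$ to a one‑parameter family of circle–containment tests---one for each reachable ground-state population---and then merge them with a compactness argument. First I would rotate into the Hamiltonian-aligned frame using the $R\in SO(3)$ of Lemma~\ref{Lem:circle-support}, which sends $\hat{\v n}\mapsto\hat{\v z}$. Since $R$ is a linear bijection of the Bloch ball, $\mathcal{T}_+(\rho)\subseteq\stab$ holds iff $R\,\mathcal{T}_+(\rho)\subseteq R\,\stab$, and $R\,\stab=\{\v r':\|M\v r'\|_1\le1\}=\stab_R$ with $M=R^{-1}$. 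In this frame, Lemmas~\ref{Lem:population-evolution} and~\ref{Lem:coherence-evolution}, combined with the rotational symmetry of the reachable set about the energy axis, identify $R\,\mathcal{T}_+(\rho)$ with the union of coaxial disks
\begin{equation}
R\,\mathcal{T}_+(\rho)=\bigcup_{q\in I_\beta(p)} D_q,\qquad D_q:=\bigl\{(x',y',z'):\ z'=2q-1,\ \sqrt{x'^2+y'^2}\le\mtg{c}_{\max}(q)\bigr\}.
\end{equation}
Hence $\mathcal{T}_+(\rho)\subseteq\stab$ iff $D_q\subseteq\stab_R$ for every $q\in I_\beta(p)$.

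Next I would fix $q$ and convert disk-containment into circle-containment. The closed disk $D_q$ is the convex hull of its bounding circle $\partial D_q=\{z'=2q-1,\ \sqrt{x'^2+y'^2}=\mtg{c}_{\max}(q)\}$, and $\stab_R$ is convex (it is the preimage of the $\ell_1$-ball under the linear map $M$). Therefore $D_q\subseteq\stab_R$ iff $\partial D_q\subseteq\stab_R$, which by Lemma~\ref{Lem:single-q} (applied with transverse radius $r_\perp(q)=\mtg{c}_{\max}(q)$) is equivalent to $\max_{\v{s}\in\{\pm1\}^3}\bigl(\mtg{c}_{\max}(q)\,\alpha_s+|2q-1|\,|h_s|\bigr)\le1$. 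Finally I would quantify over $q$: the criterion holds for all $q\in I_\beta(p)$ iff the supremum over this interval is $\le1$; since $I_\beta(p)$ is compact and $q\mapsto\mtg{c}_{\max}(q)$ is continuous on it (the radicand, a product of affine functions of $q$, is $\ge0$ there), the map $q\mapsto\max_{\v s}\bigl(\mtg{c}_{\max}(q)\alpha_s+|2q-1||h_s|\bigr)$ is continuous as a finite maximum of continuous functions, so the supremum is attained and the criterion becomes the stated maximum.

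The step I expect to be the main obstacle is the identification $R\,\mathcal{T}_+(\rho)=\bigcup_{q}D_q$, i.e.\ establishing that the reachable set is \emph{exactly} this union of disks rather than some smaller subset. One needs not only that every reachable population $q\in I_\beta(p)$ supports coherence up to $\mtg{c}_{\max}(q)$ (the tight bound of Lemma~\ref{Lem:coherence-evolution}), but also that every smaller coherence and every phase at that population is simultaneously realisable: the phase freedom follows from time-translation symmetry (free Hamiltonian phase rotations $e^{-iHt}$), and coherence can always be decreased by composing with energy-basis dephasing, itself a thermal operation, so each disk is filled. Tightness is essential for the ``only if'' direction---were the reachable set strictly smaller, containment in $\stab$ could hold while the displayed maximum exceeds $1$. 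The remaining ingredients---convexity of $\stab_R$, the convex-hull characterisation of a disk, and compactness of $I_\beta(p)$---are routine.
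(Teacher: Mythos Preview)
Your proposal is correct and follows the same route as the paper: decompose the rotated future thermal cone into a one-parameter family of coaxial disks indexed by $q\in I_\beta(p)$, invoke Lemma~\ref{Lem:single-q} at each slice, and then take the maximum over $q$. The paper leaves Theorem~\ref{thm:TO-no-magic} essentially unproved (it is stated immediately after Lemma~\ref{Lem:single-q} with the sketch for Result~\ref{Thm:magic-generation} in the main text supplying the intuition), so your write-up in fact supplies the details the paper omits---the convexity reduction from disk to boundary circle, the compactness/continuity argument turning a supremum into a maximum, and the justification that each disk is genuinely filled via phase rotations and partial dephasing.
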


\noindent
This proves Result~\ref{Thm:magic-generation}. Two immediate corollaries show how the general condition simplifies in special cases. The first concerns energy-incoherent input states:
\begin{cor}[Energy-incoherent states]
For an energy-incoherent input state ($r_\perp = 0$), Theorem \ref{thm:TO-no-magic} simplifies to:
\begin{equation}
    |z'| \|\hat{\v{n}}\|_1 \leq 1.
\end{equation}
Since $\v{m}_3 = R^{-1} \hat{\v{z}} = \hat{\v{n}}$, we have $\max_s |h_s| = \|\v{m}_3\|_1 = \|\hat{\v{n}}\|_1$.
\end{cor}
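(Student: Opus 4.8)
The plan is to specialise Theorem~\ref{thm:TO-no-magic} to the case $r_\perp=0$ and simply read off what the double maximisation collapses to. When the transverse radius vanishes, the admissible radius $r_\perp(q)=0$ for every reachable $q$, hence $\mtg{c}_{\max}(q)=0$ and the term $\mtg{c}_{\max}(q)\,\alpha_s$ drops out of the objective entirely. The containment inequality therefore reduces to $\max_{q\in I_\beta(p)}|2q-1|\,\max_{\v s\in\{\pm1\}^3}|h_s|\le 1$, and since the $q$-dependence and the $\v s$-dependence have factored, the two maxima separate.

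Next I would identify the inner maximum. By construction $M=R^{-1}$ with $R\in SO(3)$ sending $\hat{\v n}\mapsto\hat{\v z}$, so the third column of $M$ is $\v m_3 = M\hat{\v z} = R^{-1}\hat{\v z} = \hat{\v n}$. Then $h_s = \v s^\top\v m_3 = \v s^\top\hat{\v n}$, and $\max_{\v s\in\{\pm1\}^3}|\v s^\top\hat{\v n}|$ is exactly the dual characterisation of the $\ell_1$ norm: choosing $\v s = \operatorname{sgn}(\hat{\v n})$ componentwise gives $\v s^\top\hat{\v n} = \sum_i|\hat n_i| = \|\hat{\v n}\|_1$, and no sign vector can exceed this. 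Hence $\max_s|h_s| = \|\hat{\v n}\|_1$. Writing $z' = 2q-1$ for the rotated-frame height of the target, the criterion becomes $|z'|\,\|\hat{\v n}\|_1\le 1$, with the understanding that $z'$ ranges over $2I_\beta(p)-1$; this is precisely the stated simplification.

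This corollary is essentially immediate from the theorem, so I do not expect a genuine obstacle; the only point requiring a line of care is the algebraic identity $\|\hat{\v n}\|_1^2 = \alpha_s^2 + h_s^2$ consistency check from Lemma~\ref{Lem:circle-support}, i.e.\ confirming that with $r_\perp=0$ one indeed recovers the incoherent thermomajorisation picture rather than something weaker. One could also note, as a sanity check, that this matches the direct computation: an energy-diagonal state has Bloch vector $(2q-1)\hat{\v n}$ in the original frame, whose $\ell_1$ norm is $|2q-1|\,\|\hat{\v n}\|_1$, and Eq.~\eqref{Eq:stabiliser-condition} demands this be at most $1$ — so the rotated-frame bookkeeping is consistent with the bare Bloch-sphere statement, as it must be since $\ell_1$-membership in $\stab$ is rotation-frame-covariant through $M$. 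The remaining content — that the reachable $q$ fill the interval $I_\beta(p)$ — is supplied verbatim by Lemma~\ref{Lem:population-evolution}.
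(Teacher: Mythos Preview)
Your proof is correct and matches the paper's approach exactly: the paper's entire justification is the single sentence embedded in the corollary statement ($\v m_3 = R^{-1}\hat{\v z} = \hat{\v n}$, hence $\max_s|h_s| = \|\v m_3\|_1 = \|\hat{\v n}\|_1$), which you have simply unpacked in full. One nit in your parenthetical aside: the identity from Lemma~\ref{Lem:circle-support} is $\alpha_s^2 + h_s^2 = 3$, not $\|\hat{\v n}\|_1^2$, though this remark plays no role in your argument.
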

The second Corollary applies when the Hamiltonian is aligned with a Pauli axis:
\begin{cor}[Pauli-aligned special case]
When $\hat{\v{n}}$ is aligned with a Pauli axis, $R$ is a signed permutation matrix. In this case, the worst azimuth occurs at $|x'| = |y'|$, and the condition reduces to checking the $\pi/4$ section:
\begin{equation}
    \sqrt{2}|u| + |z'| \leq 1,
\end{equation}
where $x'=y'=u$.
\end{cor}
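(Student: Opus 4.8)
The plan is to feed a well-chosen straightening rotation into Lemma~\ref{Lem:single-q} and Theorem~\ref{thm:TO-no-magic}. The preliminary observation is that the rotation $R\in SO(3)$ of Lemma~\ref{Lem:circle-support} is not unique: \emph{any} $R$ with $R\hat{\v n}=\hat{\v z}$ is admissible, and the containment question $\mathcal T_+(\rho)\subseteq\stab$ --- a statement about fixed sets in the laboratory frame --- is independent of which one is used. So whenever $\hat{\v n}\in\{\pm\hat{\v x},\pm\hat{\v y},\pm\hat{\v z}\}$ I would take $R$ to be a signed permutation matrix lying in $SO(3)$: $R=\iden$ for $\hat{\v n}=\hat{\v z}$, $R=\mathrm{diag}(1,-1,-1)$ for $\hat{\v n}=-\hat{\v z}$, and the obvious axis-cycling rotations (with one sign flipped so that $\det R=+1$) for the remaining four directions. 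Then $M=R^\top$ is again a signed permutation, so $\|M\v r'\|_1=\|\v r'\|_1$ and the rotated polytope $\stab_R=\{\v r':\|M\v r'\|_1\le1\}$ is the standard octahedron.

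Second, I would evaluate the coefficients of Lemmas~\ref{Lem:circle-support}--\ref{Lem:single-q} for such an $M$. Each column $\v m_i$ of a signed permutation matrix equals $\pm$ a standard basis vector, so for every sign vector $\v s\in\{\pm1\}^3$ one has $s^\top\v m_i=\pm1$; hence $|h_s|=|s^\top\v m_3|=1$ and $\alpha_s=\sqrt{(s^\top\v m_1)^2+(s^\top\v m_2)^2}=\sqrt2$, both uniformly in $\v s$ (consistent with $\alpha_s^2+h_s^2=3$). Inserting these into the single-$q$ test of Lemma~\ref{Lem:single-q}, the maximisation over $\v s$ becomes vacuous and the condition for the transverse circle at population $q$ --- radius $r_\perp(q)=\mtg{c}_{\max}(q)$, height $z'=2q-1$ --- to lie inside $\stab$ collapses to the single scalar inequality $\sqrt2\,\mtg{c}_{\max}(q)+|2q-1|\le1$. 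Geometrically this is the corollary's ``$\pi/4$ section'' statement: on the standard octahedron the facet functional $|x'|+|y'|+|z'|$ restricted to a transverse circle of radius $r_\perp$ at height $z'$ is maximised at the azimuths with $|\cos\phi|=|\sin\phi|$, i.e.\ where $|x'|=|y'|$, with value $\sqrt2\,r_\perp+|z'|$ --- a one-line bound from $|\cos\phi|+|\sin\phi|\le\sqrt2$ with equality at $\phi=\pi/4$. Finally, applying Theorem~\ref{thm:TO-no-magic} to maximise over the thermomajorisation-reachable interval $q\in I_\beta(p)$ yields the full Pauli-aligned criterion.

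I do not anticipate a genuine obstacle here, since the corollary is a direct specialisation of the machinery already in place. The only step that deserves an explicit word is the non-uniqueness of the straightening rotation: one must note that the truth value of ``$\mathcal T_+(\rho)\subseteq\stab$'' does not depend on the choice of $R$, which is what licenses replacing a generic $R$ --- for $\hat{\v n}=\hat{\v z}$ this could be any rotation about the $z$-axis, producing a \emph{tilted} $\stab_R$ --- by the signed-permutation representative for which $\stab_R=\stab$. Everything else --- the uniform values $\alpha_s\equiv\sqrt2$, $|h_s|\equiv1$ and the elementary single-variable optimisation over $\phi$ --- is routine bookkeeping.
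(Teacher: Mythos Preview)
Your proposal is correct and is precisely the intended specialisation: the paper states this corollary without proof as an immediate consequence of Lemmas~\ref{Lem:circle-support}--\ref{Lem:single-q} and Theorem~\ref{thm:TO-no-magic}, and you carry that out by observing that for a signed-permutation $M$ one has $\alpha_s\equiv\sqrt2$ and $|h_s|\equiv1$, collapsing the $\max_{\v s}$ and reducing the circle test to $\sqrt2\,r_\perp+|z'|\le1$. Your remark that the straightening rotation $R$ is non-unique and that one may freely pick the signed-permutation representative is the only point that needed an explicit word, and you supply it.
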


\subsection{Operational and thermodynamic meanings of the magic witness}\label{App:operational-thermodynamic-meanings}

The magic-generation witness $\M(p,\beta)$ provides a sharp boundary between stabiliser and non-stabiliser states achievable via thermal processes, with magic generation occurring precisely when $\M(p,\beta)>1$. While this binary condition determines whether magic states can be generated, examining how the witness behaves relative to this threshold reveals deeper physical insights about the nature of magic generation. By transforming the witness to measure its deviation from the magic-generation threshold, we obtain quantities that characterise both the geometric and thermodynamic aspects of the process.

For energy-diagonal inputs, the witness takes the form
\begin{equation}
\M(p,\beta) = \|\hat{\v{n}}\|_{1} \times
\begin{cases}
\bigl|1 - 2p\, e^{-2\beta}\bigr|, & p < \gamma, \\[4pt]
|2p - 1|, & p > \gamma,
\end{cases}
\end{equation}
where $\|\hat{\v{n}}\|_1\in[1,\sqrt{3}]$ is the $\ell_1$-norm of the Hamiltonian direction in Bloch coordinates. The witness is bounded by this orientational ceiling, $\M(p,\beta)\le \|\hat{\v{n}}\|_1$, with equality approached as $\beta\to\infty$ in the $p<\gamma$ branch.
\smallskip

There are two natural transformations which endow the stabiliserness witness $\M$ with natural interpretations. The first and simplest transformation is the additive shift $\M(p,\beta)-1$. Geometrically, since the single-qubit stabiliser polytope corresponds to the $\ell_1$-ball in Bloch space, this quantity measures the facet-violation margin: when positive, it quantifies how far the thermally reachable set exceeds the boundary of the stabiliser octahedron. This geometric interpretation has direct operational significance. Under isotropic depolarising noise $\mathcal{D}_t(\rho)=(1-t)\rho+t\,\tfrac{\iden}{2}$, which shrinks Bloch vectors by $(1-t)$, the minimal noise strength that eliminates all stabiliserness from thermally reachable states is
\begin{equation}
t^*(p,\beta)=\max\Bigl\{\,0,1-\frac{1}{\M(p,\beta)}\Bigr\}
=
\begin{cases}
0, & \M(p,\beta)\le 1,\\[4pt]
\dfrac{\M(p,\beta)-1}{\M(p,\beta)}, & \M(p,\beta)>1.
\end{cases}
\end{equation}
Thus, the magnitude of $\M-1$ directly quantifies the robustness of magic generation to white noise, providing a practical measure of how much environmental noise the process can tolerate while still producing useful states.

In the $p<\gamma$ regime, the temperature dependence reveals additional structure. Letting $\beta_{\mathrm{crt}}$ satisfy $\M(p,\beta_{\mathrm{crt}})=1$ and defining $\delta\beta:=\beta-\beta_{\mathrm{crt}}\ge 0$, and noting that above threshold we have $|1-2p e^{-2\beta}|=1-2p e^{-2\beta}$, we obtain
\begin{equation}
\M(p,\beta)-1
=
\bigl(\|\hat{\v{n}}\|_1 - 1\bigr)\,\bigl(1 - e^{-2\delta\beta}\bigr).
\end{equation}
This shows that the facet-violation margin increases monotonically with the inverse-temperature surplus $\delta\beta$, saturating at $\|\hat{\v{n}}\|_1-1$ as $\beta\to\infty$. It is important to note that $\M-1$ represents a witness margin rather than a metric distance---it should not be identified with geometric distances such as the trace distance to the stabiliser polytope.
\smallskip

Logarithmic transformation provides a complementary perspective. The quantity $\log\M$ captures multiplicative overshoot and near threshold behaves similarly to $\M-1$, but lacks a desirable property---linear temperature dependence. A more insightful approach normalises by the orientational ceiling:
\begin{equation}
L(p,\beta):=-\log\left(1 - \frac{\M(p,\beta)}{\|\hat{\v{n}}\|_1}\right) +\log\left(1 - \frac{1}{\|\hat{\v{n}}\|_1}\right), \qquad \|\hat{\v{n}}\|_1>1.
\end{equation}
For $p<\gamma$ and any $\beta\ge \beta_{\mathrm{crt}}$, this yields the exact identity $L(p,\beta)=2(\beta-\beta_{\mathrm{crt}})$, transforming the threshold condition into a linear thermometer that directly measures how far the bath is cooled beyond criticality. The requirement $\|\hat{\v{n}}\|_1>1$ excludes the trivial axis-aligned case where no temperature leverage is available.

When $p>\gamma$, the witness becomes temperature-independent with $\M(p,\beta)=\|\hat{\v{n}}\|_1|2p-1|$. In this purely geometric regime, only the additive shift $\M-1$ remains meaningful, capturing facet overshoot and determining robustness via $t^*=1-1/\M$ when $\M>1$, while logarithmic transformations offer no additional thermal insight.

\section{Distillability (inverse temperature) bounds \label{App:distillability}}

We now turn to a complementary question. Given an initial stabiliser state and a bath at an inverse temperature $\beta$ when can the resulting reachable states be useful for magic-state distillation? In other words, how hot can the bath be while still allowing the production of states whose fidelities with the relevant Clifford orbits exceed the known distillation thresholds. 

Since thermal operations are axially symmetric around the Hamiltonian axis $\hat{\v n}$, the phase of the coherence can always be aligned with any given pure target direction $\v u$ on the Bloch sphere. Let $u_z = |\hat{\v n}. \v u|$ and $u_\perp = \sqrt{1-u^2_z}$ be the decomposition of $\v u$ into energy-axis and transverse components in the rotated frame. For a state with a population $q$ (so $z' = (2q-1)$) and coherence amplitude $r \leq \v{c}_{\max}$, the maximal overlap with $\v u$ occurs when their phases coincide, giving
\begin{equation}
    \max_{\text{phase}} \v r. \v u = r u_\perp +|2q-1|u_z.
\end{equation}
The corresponding state fidelity with the pure state $\ket  u$ is $F(\rho, \ket{u}) = \tfrac12(1+\v r \cdot \v u)$, so the maximal achievable fidelity with that direction reads
\begin{equation}\label{Eq:app-fidelity-beta}
    F^{\star}(\beta,\v u) = \frac{1}{2}\qty[1+\max_{q \in I_\beta(p)} (\v{c}_{\max}(q) u_\perp + |2q-1| u_z)].
\end{equation}
This expression is the fidelity analogue of the support-function bound appearing in Theorem~\ref{thm:TO-no-magic}.

Let $\mathcal{O}_T$ and $\mathcal{O}_H$ denote the Clifford orbits of the magic states $\ket{\ms{T}}$ and $\ket{\ms{H}}$, respectively. Each orbit consists of finitely many Bloch directions $\v u$ related by signed permutations of $\tfrac{1}{\sqrt{3}}(1,1,1)$ (for $\ms{T}$) or $\tfrac{1}{\sqrt{2}}(1,0,1)$ (for $\ms{H}$). The best fidelity attainable at temperature $\beta$ is obtained by maximising Eq.~\eqref{Eq:app-fidelity-beta} over the corresponding orbit,
\begin{equation}
    F^{\star}_\mathcal{O}(\beta):=\max_{u\in\mathcal{O}_o} F^{\star}(\beta, \v u) \quad \text{with} \quad o\in\{\ms{T},\ms{H}\}.
\end{equation}
Universal quantum computation is achievable whenever this fidelity exceeds the threshold of a chosen distillation protocol, i.e., $F^{\star}_o(\beta) \geq f_{\text{thr}}^{(o)}$. In our numerical illustrations we use the benchmark values reported for the Bravyi--Kitaev protocols~\cite{Bravyi2005}, namely $f_{\text{thr}}^{(\ms{T})}\approx 0.91$ and $f_{\text{thr}}^{(\ms{H})}\approx 0.93$, but the definition of $\beta_{\text{dist}}^{(o)}$ applies unchanged for any protocol-dependent choice of $f_{\text{thr}}^{(o)}$. The smallest inverse temperature satisfying the inequality defines the distillability bound $\beta_{\text{dist}}^{(o)}$ for the orbit $o$:
\begin{equation}
    \beta_{\text{dist}}^{(o)} :=\text{inf}\qty{\beta \geq 0 : F^{\star}_o(\beta) \geq f_{\text{thr}}^{(o)}
}.
\end{equation}
Since $I_\beta(p)$ and $\v{c}_{\max}(q)$ are monotonically increasing with $\beta$, the map $\beta \mapsto F^{\star}_o(\beta)$ is monotonically increasing, so $\beta_{\text{dist}}^{(o)}$ can be obtained by a simple one-dimensional root-finding procedure.

For energy-incoherent states, the coherence term vanishes and Eq.~\eqref{Eq:app-fidelity-beta} simplifies to
\begin{equation}
    F^{\star}(\beta,\v u) = \frac{1}{2}\qty(1+\max_{q \in I_\beta(p)} |2q-1| \max_{\v u \in \mathcal{O}_o} |\hat{\v{n}}.\v u|),
\end{equation}
where the geometric factors are
\begin{equation}
    \max_{\v u \in \mathcal{O}_T} |\hat{\v{n}}.\v u| = \frac{\|\hat{\v n}\|_1}{\sqrt{3}} \quad , \quad  \max_{\v u \in \mathcal{O}_H} |\hat{\v{n}}.\v u| = \frac{|n|_{(1)}+|n|_{(2)}}{\sqrt{2}},
\end{equation}
with $|n|_{(1)} \geq |n|_{(2)} \geq |n|_{(3)}$ the ordered components of $\v n$. When $\hat{\v n}$ is Pauli-aligned, the maximising directions coincide with the coordinate axes, and the extremum over $q$ usually occurs at interval endpoints $q=p$ or $q=q^{\star}$. Finally, for the optimal Hamiltonian $\hat{\v n} \propto (1,1,1)$, the relevant $\ms{T}$-orbit direction has $u_z = \tfrac{1}{\sqrt{3}}$ and $u_\perp = \sqrt{\tfrac{2}{3}}$, giving the expression:
\begin{equation}
    F_{\ms{T}}^{\star}(\beta) = \frac{1}{2}\qty[1+\max_{q\in I_\beta(p)}\qty(\sqrt{\frac{2}{3}}\v{c}_{\max} (q) + \frac{1}{\sqrt{3}}|2q-1|)]
\end{equation}
from which $\beta_{\text{dist}}^{(\ms{T})}$ follows immediately by solving $F^{\star}_{\ms{T}}\qty(\beta_{\text{dist}}^{(T)}) = 0.91$.

\section{Optimal Hamiltonian}\label{App:optimal-Hamiltonian}

To find the Hamiltonian that maximises the generation of nonstabiliserness under thermal operations, it suffices to analyse the distinguishability of the output state $\mathcal{E}(\rho)$ of a thermal operation from the set of stabiliser states. This naturally leads to the optimisation problem
\begin{equation}\label{Eq:optimisation-problem-optimal-H}
    \max_{H, \|H\|=1} \min_{\sigma \in \stab} \|\mathcal{E}(\rho)-\sigma\|_1
\end{equation}

Without loss of generality, we consider the Hamiltonian $H = \hat{\v n} \cdot \v \sigma$, with $\norm{H} = 1$. Let $\mathcal{E}$ denote a thermal operation and let $\rho$ be the initial state of the system, assumed to be energy-incoherent in the eigenbasis of $H$ described by populations $(p, 1-p)$. The final state after a thermal operation is $\mathcal{E}(\rho) = \tfrac12(\iden + \v r\cdot \v \sigma)$, where $\v r = m \hat{\v n}$, with no transverse components (i.e., no coherence). For a qubit, the thermomajorisation bound on the extreme reachable ground-state population, starting from an initial diagonal state $\rho$ with ground-state population $p$, is $q^{\star} = 1-pe^{-2\beta}$. Hence, the largest achievable Bloch vector length along the direction $\hat{\v n}$ is $m=|2q^\star-1|=|1-2pe^{-2\beta}|$. Crucially, $m$ depends only on $(p,\beta)$ and the fixed spectral norm of $H$ rather than on the orientation $\hat{\v n}$. 

The stabiliser polytope is the octahedron $\{\v s\in \mathbbm{R}^3:\|\v s\|_1\equiv |s_x|+|s_y|+|s_z| \leq 1 \}$. For qubits, the trace distance between states coincides with the Euclidean distance between their Bloch vectors $\|\rho-\sigma\|_1 = \|\v r-\v s\|_2$. Consequently, the optimisation problem in Eq.~\eqref{Eq:optimisation-problem-optimal-H} reduces to evaluating:
\begin{equation}
    V(m) = \max_{\hat{\v{n}}:\|\hat{\v{n}}\|_2 = 1} \min_{\|\v s\|_1\leq1}\|m\hat{\v{n}}-\v s\|_2,
\end{equation}
where $m\geq 0$ and the vectors lie in $\mathbbm{R}^3$. Since the trace norm for qubits equals the Euclidean distance between their Bloch vectors, our optimisation becomes purely geometric: we seek the unit direction $\hat{\v{n}}$ that maximises the Euclidean distance from the point $m \hat{\v{n}}$ to the $\ell_1$-ball: $B_1:=\{\v s\in\mathbbm{R}^3:\|\v s\|_1\leq1\}$. Thus,
\begin{equation}
    f(\hat{\v{n}}):=\min_{\|\v s\|_1 \leq 1}\|m \hat{\v{n}} - \v s\|_2 = \operatorname{dist}(m \hat{\v{n}}, B_1),
\end{equation}
where $\operatorname{dist}(\v x,C)$ denotes the Euclidean distance from $x$ to $C$. Now, we maximise $f(\hat{\v{n}})$ over all unit vectors $\hat{\v{n}}$. To this end, we reformulate the distance via convex duality. For any closed convex set $C$ containing the origin the following holds
\begin{equation}
    \operatorname{dist}(\v x, C) = \max_{\|\v y\|_2\leq 1} \qty(\v y^{\ms T} \v x - \underset{\v z \in C}{\operatorname{sup}\v y^{\ms T}} \v z).
\end{equation}
For $C = B_1$, we obtain $\operatorname{sup}_{\v z \in B_1} \v y^{\ms T} \v z = \|\v y\|_\infty$. Consequently, $f(\hat{\v n}) = \max_{\|\v y\|_2\leq 1} (m \v y^{\ms T}\hat{\v n} - \|\v y \|_\infty)$. We now turn to the computation of the function $V(m)$:
\begin{align}
    V(m) = \max_{\hat{\v{n}}:\|\hat{\v{n}}\|_2 = 1} f(\hat{\v n}) = \max_{\hat{\v{n}}:\|\hat{\v{n}}\|_2 = 1} \max_{\|\v y\|_2\leq 1}(m \v y^{\ms T}\hat{\v n} - \|\v y \|_\infty) =\max_{\|\v y\|_2\leq 1}\max_{\hat{\v{n}}:\|\hat{\v{n}}\|_2 = 1} (m \v y^{\ms T}\hat{\v n} - \|\v y \|_\infty).
\end{align}
We may interchange the order of maximisation over $\hat{\v n}$ and $\v y$ because both feasible sets are compact and the objective is continuous and bilinear in $(\hat{\v n}, \v y)$. This lets us first solve $\max_{\|\hat{\v n}\|_2=1} \v y^{\ms T}\hat{\v n}=\|\v y\|_2$ for fixed $\v y$. Let $t=\|\v y\|_2 \in [0,1]$, for any $\v y$ with $\|\v y\|_2 = t$, it follows that $\|\v y\|_\infty \geq \tfrac{t}{\sqrt{3}}$. Indeed, if all $|y_i| < t/\sqrt{3}$, then $\|\v y\|_2 < t$, a contradiction. Hence, $m\|\v y\|_2 - \|\v y\|_\infty \leq t(m - \tfrac{1}{\sqrt{3}})$. Therefore, $V(m) \leq \max_{0\leq t \leq 1} t\qty(m-\frac{1}{\sqrt{3}})$. This leads to two cases. If $m\leq \tfrac{1}{\sqrt{3}}$, the maximum is $0$, attained at $t=0$. Taking $\v y = \v 0$ shows that $V(m)\geq 0$, and hence, $V(m) = 0$. In the second case, when $m\geq \tfrac{1}{\sqrt{3}}$, the maximum is attained at $t=1$, giving $V(m)\leq m -\frac{1}{\sqrt{3}}$. Choosing $\v y = \frac{1}{\sqrt{3}}(1,1,1)$ yields $\|\v y\|_2 = 1$ and $\| \v y\|_\infty = \tfrac{1}{\sqrt{3}}$, which shows that $V(m) = m -\frac{1}{\sqrt{3}}$. Combining the two cases, we obtain:
\begin{align}
    V(m) = \begin{cases} 0 \quad \quad &\text{if} \quad m  \leq \frac{1}{\sqrt{3}} \\
    m-\frac{1}{\sqrt{3}} \quad &\text{if} \quad m > \frac{1}{\sqrt{3}}
    \end{cases}
\end{align}
From the equality above (attained at $\v y = \tfrac{1}{\sqrt{3}}(1,1,1)$), the inner maximisation $\max_{\|\hat{\v n}\|_2 = 1} \v y^{\ms T} \hat{\v n}$ is achieved by choosing $\hat{\v n}$ parallel to $\v y$. Hence, an optimal axis is
$\hat{\v n}_\star= \tfrac{1}{\sqrt{3}}(1,1,1)$ and, by the symmetry of the $\ell_1$-ball, any sign pattern $\hat{\v n}_\star= \tfrac{1}{\sqrt{3}}(\pm 1,\pm 1,\pm1)$ is also optimal. Consequently, the optimal Hamiltonian is
\begin{equation}
    H_\star = \hat{\v n}_\star\cdot \v\sigma = \frac{X+Y+Z}{\sqrt{3}}.
\end{equation}

\section{Remarks on catalytic extension \label{App:catalytic}}

We briefly outline how our framework extends to catalytic scenarios, providing a high-level overview rather than detailed analysis. The main text establishes a necessary-and-sufficient condition for magic state generation under thermal operations (TO), together with the future thermal cone of magic $\mathcal{T}_+^{\mathcal{M}}(\rho)$ and its normalised volume $\mathcal{V}_+^{\mathcal{M}}(\rho)$. This appendix presents the catalytic extensions within the same geometric framework. We write $F_\beta(\rho)=\tr(\rho H)-\beta^{-1}S(\rho)$ for the nonequilibrium free energy and $C(\rho)$ for the set of resonant coherent modes. As shown in~\cite{shiraishi2025}, convertibility under correlated-catalytic thermal operations is governed by the free-energy order together with the mode-inclusion condition $C(\rho')\subseteq C(\rho)$.

We define the catalytic analogue of the magic future cone as
\begin{equation} \label{Eq:cat-cone}
  \mathcal{T}^{\mathcal{M}}_{+,\mathrm{cat}}(\rho):=
  \Bigl\{\sigma\notin\stab:
  F_\beta(\sigma)\le F_\beta(\rho)\ \text{and}\ C(\sigma)\subseteq C(\rho)\Bigr\}.
\end{equation}
When $C(\rho)\neq\{0\}$, we have the strict inclusion $\mathcal{T}^{\mathcal{M}}_{+}(\rho)\subsetneq\mathcal{T}^{\mathcal{M}}_{+,\mathrm{cat}}(\rho)$. This stems from the fact that the catalytic setting removes constraints such as the coherence cap $r\le\v{c}_{\max}(q)$ present in the noncatalytic construction. Consequently, the associated volume strictly increases:
\begin{equation}
  \mathcal{V}^{\mathcal{M}}_{+,\mathrm{cat}}(\rho)>\mathcal{V}^{\mathcal{M}}_{+}(\rho).
\end{equation}

In analogy with Eq.~\eqref{Eq:critical-temperature}, we define the catalytic critical inverse temperature by
\begin{equation}
  \label{eq:betacat}
  \beta_{\mathrm{crt}}^{\mathrm{cat}}(\rho,H)
  :=\inf\Bigl\{\beta\ge0:\ \exists\ \sigma\notin\stab\ \text{with}\ 
  F_\beta(\sigma)\le F_\beta(\rho),\ C(\sigma)\subseteq C(\rho)\Bigr\}.
\end{equation}
Since correlated catalysts extend the set of reachable states, $\beta_{\mathrm{crt}}^{\mathrm{cat}}(\rho,H)\le\beta_{\mathrm{crt}}(\rho,H)$, with equality only when the input state $\rho$ is energy-incoherent. For target states $\sigma$ diagonal in the energy basis of $H$, \eqref{eq:betacat} reduces to a one-variable root in the classical order $F_\beta(\sigma_z)\le F_\beta(\rho)$. Finally, in the catalytic regime, the population–coherence ``budget'' $\bigl[I_\beta(p),\v{c}_{\max}(q)\bigr]$ is superseded by a single free-energy inequality together with $C(\sigma)\subseteq C(\rho)$. Thus, all temperature thresholds are determined solely by free-energy comparisons. 

\end{document}